\newtheorem{theorem}{Theorem}
\newtheorem{lemma}[theorem]{Lemma}
\newtheorem{definition}[theorem]{Definition}
\newtheorem{remark}[theorem]{Remark}
\def\bbc{\mathbb{C}}
\newcommand{\comment}[1]{}
\title{Time-Dependent Hamiltonian Simulation via Time-Independent Dynamics in a Larger Space}
\author{Zecheng Li}
\author{Chunhao Wang}
\affil{Department of Computer Science and Engineering, Pennsylvania State University}
\affil[]{Email: \{zxl5523,cwang\}@psu.edu}
\date{}
\begin{document}
\maketitle

\begin{abstract}
In this paper, we present a proof-of-concept quantum algorithm for simulating time-dependent Hamiltonian evolution by reducing the problem to simulating a time-independent Hamiltonian in a larger space using a discrete clock Hamiltonian construction. A similar construction was first explored for this simulation problem by Watkins, Wiebe, Roggero, and Lee [PRX Quantum, 2024]. Our algorithm improves upon their work in terms of the dependence on evolution time and precision. In addition, the complexity matches the state-of-the-art simulation algorithms using other approaches. To achieve this improvement, we use Duhamel's principle to treat the clock and system Hamiltonians separately and exploit properties of Gaussian quadrature to reduce the simulation cost. Our approach demonstrates that time-dependent Hamiltonian simulation can be as efficient in a simpler framework and hence provides a new angle to model and simulate time-dependent systems.
\end{abstract}

\section{Introduction}
In the past few years, quantum computing has made remarkable progress, exemplified by milestones such as recent developments of quantum computers~\cite{AAB+19,AAA+24} and demonstrations of error-correcting codes on real quantum devices~\cite{SER+23,PdM+24,BDE+24}. These advancements are filling the gaps between theoretical and practical quantum advantages, particularly in solving problems that are intractable for classical computers. Among these, Hamiltonian simulation stands out as a critical application, and it is probably the first practical application for large-scale fault-tolerant quantum computers. The task of Hamiltonian simulation involves modeling the time evolution of a quantum system, governed by a Hamiltonian $H$, through implementing the unitary operator $e^{-iHt}$ for evolution time $t$ using elementary quantum gates. The ability to simulate the time evolution of Hamiltonians enables the study of complex phenomena in chemistry, material science, and fundamental physics. For instance, simulating molecular interactions and the electronic structure of materials---tasks that are computationally prohibitive for classical computers---can be efficiently tackled using quantum computers. 

Since 1982, when Feynman first proposed quantum computers~\cite{Feynman82}, efficient quantum algorithms for Hamiltonian simulation have been extensively studied. Some examples include~\cite{Lloyd96,ATS03,BACS07,BC09,Childs10,WBHS11,CW12,BCCKS15,BCK15,BN16,BCCKS17}. In the recent decade, advanced techniques such as quantum signal processing~\cite{LC17a}, qubitization~\cite{LC16}, and quantum singular value transformation~\cite{GSLW19} have pushed the complexity of Hamiltonian simulation algorithms to optimal. 
In the meantime, a considerable amount of efforts have been made to study and design simulation techniques from a practical perspective~\cite{CMNRS18,Campbell19,BCDGMS24}. Furthermore, Hamiltonian serves as the backbone for many advanced quantum algorithms such as quantum phase estimation~\cite{Kitaev95}, quantum linear systems solvers~\cite{HHL09}, and variational quantum eigensolvers~\cite{PMS+14}, which are pivotal for applications ranging from drug discovery to optimization problems. As quantum computing continues to advance, the ability to simulate Hamiltonians will unlock transformative possibilities across different fields, solidifying its role as a cornerstone of the quantum revolution.

Time-dependent Hamiltonian simulation extends the framework of Hamiltonian simulation to systems where the Hamiltonian $H(t)$ varies with time, reflecting more realistic scenarios in quantum dynamics. Unlike time-independent Hamiltonians, time-dependent Hamiltonians require more sophisticated techniques to accurately simulate the time-ordered exponential 
\begin{align}
    U(t) = \mathcal{T} \exp(-i \int_0^t H(\tau) \dd \tau), 
\end{align}
 where $\mathcal{T}$ denotes the time-ordering operator. This capability is crucial for modeling a wide range of physical phenomena, such as driven quantum systems, chemical reactions under external fields, and quantum control processes. Recent developments in quantum algorithms for simulating time-dependent Hamiltonians, such as simulation algorithms based on Dyson series~\cite{KSB19} and the interaction picture~\cite{LW19}, have significantly improved the efficiency and precision of these simulations. Berry, Childs, Su, Wang, and Wiebe~\cite{BCSWW20} further improved the simulation techniques by considering the $L^1$-norm dependence, which is much more efficient when the Hamiltonian varies significantly. Recently, Mizuta and Fujii~\cite{MF23} showed that when the Hamiltonian is time-periodic, the simulation algorithm can be made optimal---saturating the simulation lower bound for simulating time-independent Hamiltonians. These developments are particularly important for applications in quantum chemistry, where time-dependent electric or magnetic fields influence molecular dynamics.

It is intriguing to look at time-dependent Hamiltonians from a different angle: we can view these systems as open quantum systems as the system itself is controlled by some external field that enables the time-dependence, while the external control is sophisticated enough to maintain the unitarity of the system, preventing any energy or information leakage to the environment. If the whole universe is running as a unitary process, then every open quantum system is a subsystem of a unitary process in a larger space, including time-dependent Hamiltonian evolution. From this perspective, it is natural to study time-dependent Hamiltonian simulation algorithms as a time-independent Hamiltonian evolution in a larger space. 

To construct such a time-independent Hamiltonian, we consider the discrete clock construction, where the external register is in a ``clock'' state controlling the system Hamiltonian to act for the corresponding time. 
There is also a clock Hamiltonian that advances the clock state. Intuitively, this construction incorporates the external control coherently in a larger system, and the reduced system dynamics is the time-dependent Hamiltonian evolution. Informally, for a time-dependent Hamiltonian $H(t)$ for $t \in [0, T]$, the joint time-independent system Hamiltonian, denoted by $H_{\mathrm{sys}}$, is defined as
\begin{align}
    H_{\mathrm{sys}} = \sum_{t} \ketbra{t}{t} \otimes H(t).
\end{align}
Here, the first register is the ``clock'' register, and the clock Hamiltonian, denoted by $H_{\mathrm{clk}}$, is defined such that
\begin{align}
  e^{-i H_{\mathrm{clk}}\tau} \ket{t}\ket{\psi} = \ket{t+\tau}\ket{\psi}.
\end{align}
Note that in the above description, the clock space is continuous, which cannot be directly used in computation. In \cref{sec:clock}, we present the details of the clock Hamiltonian construction and its discretization. 

The idea of the clock Hamiltonian construction is not new. It dates back to 1985 when Feynman~\cite{Feynman85} introduced the concept of encoding computation histories into Hamiltonians. In 2002, Kitaev used this construction to prove QMA-completeness~\cite{KSV02}. 
In the context of Hamiltonian simulation, Watkins, Wiebe, Roggero, and Lee used this construction in their pioneering work~\cite{WWRL24} to present an efficient quantum algorithm for simulating time-dependent Hamiltonians. The complexity of their algorithm scales quadratically in the evolution time and polynomially in the inverse of precision. Their suboptimal complexity is mainly due to the unfavorably large norm of the clock Hamiltonian. Although not matching the state-of-the-art time-dependent Hamiltonian simulation algorithm, their work was the first to show a proof of concept of simulating time-dependent Hamiltonians using the discrete clock. Recently, Mizuta, Ikeda, and Fujii~\cite{MIF24} considered product formula and multi-product formula for simulating time-dependent Hamiltonians, and they improved the error bound of these methods with commutator scaling. It is worth noting that although simulation methods based on product formula and multi-product formula do not match the asymptotic performance of post-Trotter techniques (including the simulation algorithm presented in this paper), these methods are usually more practical to implement in near-term quantum devices thanks to their much favorable requirements on the number of ancillary qubits.

In this paper, we propose a time-dependent Hamiltonian simulation algorithm that improves upon~\cite{WWRL24}. In particular, we use a rather simpler discrete-clock construction with different simulation techniques to achieve polynomial improvement in the evolution time and exponential improvement in the precision dependence. Let $H(t)$ for $0 \leq t \leq T$ be the time-dependent Hamiltonian to simulate. We consider the standard sparse-access model \cite{LW19} for time-dependent Hamiltonians of dimension $N$ which assumes that the number of nonzeros entries in each row is at most $d$ in the entire time interval $[0, T]$. We have access to the following oracles:
\begin{align}
\label{oracle_Ham}
    \mathcal{O}_s \ket{j, s} &= \ket{j, \mathrm{col}(j, s)}, \\
    \mathcal{O}_v \ket{t, j, k, z} &= \ket{t, j, k, z\oplus H_{jk}(t)},
\end{align}
where $\mathrm{col}(j, s)$ is the index of the $s$-th entry of row $j$ that can be nonzero in $[0, T]$.

Our main result is summarized in the following theorem.

\begin{theorem}
    \label{thm:main-theorem}
    Let $H(t)$ be a $d$-sparse time-dependent Hamiltonian on $n$ qubits for $t \in [0, T]$. Then the evolution of $H(T)$ for time $T$ can be simulated with precision $\epsilon$ using 
    \begin{align}
        O\left(dTH_{\max}\frac{\log(TdH_{\max}/\epsilon)}{\log\log(TdH_{\max}/\epsilon)}\right)
    \end{align}
    queries to $\mathcal{O}_s$ and $\mathcal{O}_v$, with 
    \begin{align}
        O\left(dTH_{\max}\left(n +
    \log\left(\frac{TdH_{\max}\dot H_{\max}}{\epsilon}\right)\cdot\frac{\log(TdH_{\max}/\epsilon)}{\log\log(TdH_{\max}/\epsilon)}\right)\right)
    \end{align}
    additional 1- and 2-qubit gates, and
    \begin{align}
        O\left( \frac{\log^2(TdH_{\max}/\epsilon)}{\log\log (Td H_{\max}/\epsilon)}  + \log \left(\frac{T}{\epsilon}\dot H_{\max}\right)\right)
    \end{align}
    ancillary qubits.
\end{theorem}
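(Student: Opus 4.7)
The plan is to realize the time-dependent evolution $U(T)$ as a time-independent evolution $e^{-iH_{\mathrm{joint}}T}$ with $H_{\mathrm{joint}}=H_{\mathrm{clk}}+H_{\mathrm{sys}}$ on a discretized clock-enhanced space, and then to simulate it via a Duhamel-style expansion that treats $H_{\mathrm{clk}}$ and $H_{\mathrm{sys}}$ asymmetrically. The crucial observation is that $H_{\mathrm{clk}}$ has large norm but is essentially a shift, so its exponentials should be implemented exactly rather than integrated through the simulation cost; only $H_{\mathrm{sys}}$ should drive the query complexity, yielding the $dTH_{\max}$ factor in \cref{thm:main-theorem} instead of the $\|H_{\mathrm{clk}}\|T$ factor that plagues \cite{WWRL24}.

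First I would discretize the clock onto $M$ uniform points with spacing $T/M$ and show, by standard interpolation arguments, that projecting $e^{-iH_{\mathrm{joint}}T}\ket{0}_{\mathrm{clk}}\ket{\psi}$ onto the appropriate final clock state approximates $U(T)\ket{\psi}$ with error $O(T^2\dot H_{\max}/M)$, fixing $M=\tilde O(T\dot H_{\max}/\epsilon)$ and accounting for the $\log(T\dot H_{\max}/\epsilon)$ contribution to the ancilla count. Next, I would split the evolution into $r=O(TdH_{\max})$ segments of length $\tau=\Theta(1/(dH_{\max}))$ and, on each segment, apply Duhamel's principle iteratively to obtain the truncated interaction-picture expansion
\[
  e^{-iH_{\mathrm{joint}}\tau}\;\approx\;\sum_{k=0}^{K}(-i)^{k}\int_{0\le s_{1}\le\cdots\le s_{k}\le\tau}e^{-iH_{\mathrm{clk}}(\tau-s_{k})}H_{\mathrm{sys}}\cdots H_{\mathrm{sys}}e^{-iH_{\mathrm{clk}} s_{1}}\,\dd s_{1}\cdots\dd s_{k},
\]
where the usual Dyson-truncation analysis gives $K=O\!\left(\log(r/\epsilon)/\log\log(r/\epsilon)\right)$, since $\|H_{\mathrm{sys}}\|\tau=O(1)$ on each segment.

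I would then discretize each $k$-dimensional integral by tensorized Gaussian quadrature on the simplex. Because the integrand is an entire operator-valued function of the $s_j$'s (each factor $e^{-iH_{\mathrm{clk}}s_j}$ is analytic and norm-bounded), classical convergence results yield super-algebraic error decay, so $\polylog(1/\epsilon)$ nodes per dimension suffice. The resulting finite sum is packaged as a linear combination of unitaries: a $\mathrm{PREPARE}$ routine loads the Gauss weights, a $\mathrm{SELECT}$ routine reads the node indices from an ancilla register and applies the corresponding alternating product of exact clock translations and block-encoded $H_{\mathrm{sys}}$ (each encoding costing $O(d)$ queries to $\mathcal{O}_s,\mathcal{O}_v$ via the standard sparse-access construction), and oblivious amplitude amplification boosts the resulting block encoding of $e^{-iH_{\mathrm{joint}}\tau}$ to a unitary. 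Composing $r$ segments yields the stated $O\!\left(dTH_{\max}\log(\cdot)/\log\log(\cdot)\right)$ query count, with the additional gate and ancilla counts obtained by summing per-segment block-encoding compilation cost and the LCU/clock indexing overhead.

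The principal obstacle is controlling the Gaussian quadrature error: the integrand involves $e^{-iH_{\mathrm{clk}}s_j}$ whose spectral bandwidth naively scales with the clock size $M$, so a crude bound would demand $\Omega(M)$ nodes and destroy the complexity. I would resolve this by restricting to the clock subspace actually reachable from $\ket{0}$ within a $k$-fold Dyson term---a low-weight superposition that spreads only at rate $\|H_{\mathrm{sys}}\|$---thereby bounding the effective frequencies acting on the relevant initial state and recovering the $\polylog(1/\epsilon)$ node count. The secondary obstacle is a clean accounting of the several error channels (clock discretization, Dyson truncation, quadrature, LCU normalization, and amplification) against a single target $\epsilon$ while preserving the doubly-logarithmic factors advertised in \cref{thm:main-theorem}.
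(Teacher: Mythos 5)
Your overall architecture matches the paper's: discrete clock register, Duhamel/interaction-picture expansion separating the clock and system Hamiltonians, simplex Gaussian quadrature, LCU with oblivious amplitude amplification, and $O(TdH_{\max})$ segments. However, there is a genuine gap at the step you yourself flag as the principal obstacle, and your proposed resolution does not work. The Gauss--Legendre error for a $q$-node rule scales with the $2q$-th derivative of the integrand, and here the integrand $J(s)=e^{iDs}Be^{-iDs}$ (with $D$ the diagonalized clock Hamiltonian) has $\|J^{(2q)}\|=(2\|D\|)^{2q}\|B\|$ with $\|D\|=O(T^{3}\dot H_{\max}/\epsilon^{2})$. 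Analyticity of the integrand does not give super-algebraic convergence with $\polylog(1/\epsilon)$ nodes when the relevant Bernstein-ellipse/derivative constants blow up polynomially in $1/\epsilon$. Your fix---restricting to the clock subspace reachable from $\ket{0}$---fails because the initial clock state is a position-basis delta function, hence uniformly spread over \emph{all} Fourier modes of the clock; moreover $B$ is the position-diagonal system Hamiltonian conjugated by the QFT, so it couples frequency pairs across the entire spectrum of $D$, and the oscillation rate of $J(s)$ genuinely scales with $\|D\|$ on the relevant states. The paper concedes this point and takes $q=\Omega(T^{4}\dot H_{\max}/\epsilon^{2})$ quadrature nodes per integration variable.

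What makes the complexity survive a polynomially large $q$---and what is missing from your proposal---is that $q$ enters the circuit only through the $\mathrm{PREPARE}$ state over the nested quadrature weights $\hat w_{(j_k)}\cdots\hat w_{(j_k,\ldots,j_1)}$. The paper factors this state into a tensor product of $K$ single-register states of the form $\sum_{j}\sqrt{\hat x_{j}^{\ell}w_{j}}\ket{j}$ and prepares each with $O(\log q)=O(\log(1/\epsilon))$ gates via Grover--Rudolph, using the asymptotics of Legendre roots and weights ($w_j\approx\frac{\pi}{q}\sqrt{1-x_j^2}\approx x_{j+1}-x_j$) to show that the required partial sums $\sum_{j=a}^{b}x_j^{\ell}w_j$ are Riemann sums of explicit integrals computable classically to sufficient accuracy. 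Your write-up says only that ``a $\mathrm{PREPARE}$ routine loads the Gauss weights,'' which with $q=\poly(1/\epsilon)$ nodes would naively cost $\poly(1/\epsilon)$ gates and destroy the advertised gate count. You would need to either supply this partial-sum/Grover--Rudolph argument or find a genuinely different way to keep the node count polylogarithmic; as written, neither is established. (A minor additional discrepancy: the clock discretization actually requires $M=O(T^{4}\dot H_{\max}/\epsilon^{2})$ because of the commutator error between $H_{\mathrm{clk}}$ and $H_{\mathrm{sys}}$, not $\tilde O(T\dot H_{\max}/\epsilon)$, though this only affects the ancilla count logarithmically.)
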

Here, $H_{\max}$ and $\dot H_{\max}$ are measures of the magnitude of $H(t)$ and the rate of change of $H(t)$, whose definitions are presented in \cref{sec:notation}.

\begin{remark}
    Although the complexity of our algorithm is presented in terms of the worst case of the max-norm of $H(t)$, our method can be adapted to achieve the $L^1$-norm dependence as in~\cite{BCSWW20}, assuming the ability to rescale the Hamiltonian. 
\end{remark}

\paragraph{Technical overview} The challenge of dealing with the discrete clock Hamiltonians is their large norms. This large norm is inevitable because as the discretization becomes finer, the clock state should evolve faster to catch up with the evolution pace. In addition, we use a rather simpler discrete-clock construction than~\cite{WWRL24}, and this simpler construction results in even larger norms of the clock Hamiltonian. To tackle this difficulty, we use an approximation scheme inspired by Duhamel's principle to separately treat the clock and system Hamiltonians. Duhamel's principle has been used in both quantum and classical algorithms for simulating open quantum systems~\cite{CL25,LW23}. We also use a scaled Gaussian quadrature to efficiently approximate the integral (obtained by Duhamel's principle) as a linear combination of unitaries (LCU). This involves a quadrature method that yields a much more efficient LCU approximation without dealing with the time ordering. However, we still need to deal with another difficulty that prevents us from achieving a poly-logarithmic dependence on precision: the norm of the higher-order derivative of the time-dependent evolution operator is much larger than the time-independent case (which was also observed in simulating time-dependent open quantum systems~\cite{HLLLWW24}). This norm is critical in the error bound of the Gaussian quadrature approximation. To tackle this, we explore efficient approximation schemes of the partial sum of quadrature weights so that the state preparation procedure of LCU can be efficiently implemented by the Grover-Rudolph approach, hence keeping the complexity $\mathrm{polylog}(1/\epsilon)$.

\paragraph{Related work using the interaction picture and truncated Dyson series} Our algorithm shares a similar structure with the simulation algorithm by Low and Wiebe~\cite{LW19}. In particular, our treatment by Duhamel's principle in \cref{thmt@@duhamel} can be derived by the interaction picture, and then the evolution operator may be implemented by a truncated Dyson series, as in~\cite{LW19}. From this perspective, our algorithm does not offer any complexity advantage compared with~\cite{LW19} and~\cite{KSB19}. However, the goal of this work is to demonstrate that a reduction from time-dependent Hamiltonian simulation to time-independent Hamiltonian simulation can also provide a near-optimal quantum algorithm for this problem, which, along this line, improves the previous clock construction in~\cite{WWRL24}. 
Here, we show that simulation algorithms on top of this reduction can be as efficient as the state-of-the-art simulation algorithms, hence providing a different yet efficient model of understanding time-dependent evolution by viewing it as time-independent evolution in a larger space. Note that this is not immediately obvious using the original interaction picture approach in~\cite{LW19}, where we need to simulate another time-dependent evolution as a subproblem.

\paragraph{Open questions} This research leaves several interesting open questions:
\begin{enumerate}
  \item There are generally two approaches to constructing clock states and clock operators. The first approach, as used in this paper, is based on taking the logarithm of an increment operator, which produces a simple clock operator but at the expense of having a large operator norm. It also suffers from sizable off-diagonal commutators with $H_{\mathrm{sys}}$. We managed to solve these problems by applying Duhamel's principle and Gaussian quadrature. The second approach, as in \cite{WWRL24}, inspired by the momentum operator, allows for truncation to control the operator norm; however, its associated Gaussian clock state must have a very tightly constrained width, which in turn restricts its stability and further decrease the simulation precision. An open question is whether there exists an alternative construction of clock operators and clock states that strikes a better balance between maintaining a stable and simple state and keeping the operator norm sufficiently small to be practical.
  \item What is the optimal cost for simulating time-dependent Hamiltonians? For time-independent Hamiltonians, the lower bound on the simulation cost is $\Omega(t + \log(1/\epsilon))$, and it has been saturated by advanced simulation techniques such as quantum signal processing~\cite{LC17a}, quantum singular value transformation~\cite{GSLW19}, and qubitization~\cite{LC16}. Since time-dependent Hamiltonians are more general, this additive lower bound also applies to the time-dependent case. Despite the known multiplicative upper bound, the additive lower bound was only achieved in special cases such as time-periodic Hamiltonians~\cite{MF23}. Both~\cite{WWRL24} and this work propose an avenue to fill the gap by modeling time-dependent systems as a time-independent system in a larger space; however, we still need technical tools to optimally simulate the time-dependent Hamiltonian in a larger space, especially to deal with the large norm.
\end{enumerate}

\section{Preliminaries}
\subsection{Notation}
\label{sec:notation}
For vectors, we use $\norm{\cdot}$ to denote the Euclidean norm. Let $H$ be a matrix, we use $H_{j,k}$ to denote its entry indexed by $(j, k)$. Define the \emph{max-norm} of a Hamiltonian $H$, denoted by $\norm{H}_{\max}$ as $\norm{H}_{\max} \coloneqq \max_{i, j} |H_{j, k}|$. For a time-dependent matrix $H(t)$, we define the following to measure its magnitude and rate of change.
$H_{\max} \coloneqq \max_{t\in [0,T]}\|H(t)\|_{\max}$, $\dot {H}(t) \coloneqq \frac{\mathrm{d}}{\mathrm{d}t} H(t)$, and $\dot H_{\max} \coloneqq \max_{t\in [0,T]}\left\| \frac{\mathrm{d}H(t)}{\mathrm{d}t}\right\|$.

\subsection{Block-encoding and the LCU construction}
Although we use sparse-access oracles as input, it is convenient to use block-encoding to present our algorithm. A block-encoding of a matrix $A$ is a unitary $U$ whose first block is a normalized $A$ in the sense that
    $U = \left(\begin{smallmatrix}
        A/\alpha & \cdot\\
        \cdot & \cdot
    \end{smallmatrix}\right)$.
More formally, if $A$ is acting on $n$ qubits, we say that an $(n+m)$-qubit unitary $U$ is an $(\alpha, m, \epsilon)$-block-encoding of $A$ is
\begin{align}
    \norm{A - \alpha (\bra{0} \otimes I_{2^n})U(\ket{0}\otimes I_{2^n})} \leq \epsilon,
\end{align}
where $I_{2^n}$ is the identity operator acting on $n$ qubits. When the number of qubits and precision are less important, we also refer to this block-encoding as an $\alpha$-block-encoding.

We adopt the following definition of the block-encoding for time-dependent matrices.
\begin{definition}[Time-dependent matrix block-encoding \cite{LW19}]
\label{oracle_inputblockencoding}
    Given a matrix $H(s): [0,t] \mapsto \mathbb{C}^{2^{n \times n}}$, integer $M > 0$ is the dimension of our discretized time space, and a promise $\|H\| 
    \leq \alpha$, then the time-dependent matrix block-encoding $\mathrm{HAM_T}\in \mathbb{C}^{M2^{n}\times M2^{n}}$ is defined as 
    \begin{align}
        \mathrm{HAM_T} &= \begin{pmatrix}
            H/\alpha & \cdot \\
            \cdot &\cdot
        \end{pmatrix},
        \quad H = \mathrm{Diagonal}[H(0), H(t/M), \dots, H((M-1)t/M)], \\
        &\implies (\langle0| \otimes I)\mathrm{HAM_T}(|0\rangle\otimes I) = \sum_{m=0}^{M-1}|m\rangle \langle m| \otimes \frac{H(mt/M)}{\alpha}.
    \end{align}
\end{definition}
For this work, it suffices to choose 
\begin{align}
    M = \Theta \left(\max_t \|\dot{H}\| \frac{T^3}{\epsilon^2} \right).
\end{align}

Here, $T$ is the total evolution time. The following lemma shows how to efficiently implement the block-encoding of time-dependent matrices using their sparse-access oracles.
\begin{lemma}[Preparing $\mathrm{HAM_T}$ with sparse access model \cite{LW19}]
\label{lemma_constructingbe}
    Given sparse access input oracle for time-dependent Hamiltonian $H(t)\in \mathbb{C}^{2^n \times 2^n}$ as in \cref{oracle_Ham},
    one can prepare $\mathrm{HAM_T}$ as defined in \cref{oracle_inputblockencoding}, with $O(1)$ queries of $\mathcal{O}_s$ and $\mathcal{O}_v$, and $O(n)$ primitive gates, with block-encoding normalizing constant $\alpha$ being $dH_{\max}$.
\end{lemma}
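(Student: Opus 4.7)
The plan is to reduce this to the standard sparse-matrix block-encoding construction by viewing the extended matrix $H$ in \cref{oracle_inputblockencoding} as a single sparse matrix on $n + \lceil \log M \rceil$ qubits. Since $H = \sum_{m=0}^{M-1} |m\rangle\!\langle m| \otimes H(mt/M)$ is block-diagonal across the clock register, every row of $H$ has at most $d$ nonzero entries, $H$ is Hermitian whenever each $H(mt/M)$ is, and $\|H\|_{\max} \leq H_{\max}$. Thus the claimed normalizing constant $\alpha = dH_{\max} = \sqrt{d \cdot d}\,\|H\|_{\max}$ is exactly the one produced by the standard construction for $d$-sparse matrices.

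First I would lift the given oracles so they treat the clock register appropriately. The sparsity oracle $\widetilde{\mathcal{O}}_s$ for the extended matrix acts as $|m, j, s\rangle \mapsto |m, j, \mathrm{col}(j, s)\rangle$, which is just $\mathcal{O}_s$ with the clock register passing through. The value oracle $\widetilde{\mathcal{O}}_v$ acts as $|m, j, k, z\rangle \mapsto |m, j, k, z \oplus H_{jk}(mt/M)\rangle$; this is obtained by interpreting the clock index $m$ as the time argument $mt/M$ and invoking $\mathcal{O}_v$, which requires only an arithmetic relabeling on $O(\log M)$ qubits and no additional oracle calls.

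Next I would apply the standard sparse-access block-encoding construction (e.g.\ Lemma~48 of Gily\'en--Su--Low--Wiebe): prepare a uniform superposition $\frac{1}{\sqrt{d}}\sum_{s=1}^{d} |s\rangle$ on a $\lceil\log d\rceil$-qubit ancilla, query $\widetilde{\mathcal{O}}_s$ to obtain the column index $k$, query $\widetilde{\mathcal{O}}_v$ into a value register, apply a sequence of controlled rotations to coherently load an amplitude proportional to $H_{jk}(mt/M)/H_{\max}$ (with the sign absorbed into a phase ancilla), uncompute $\widetilde{\mathcal{O}}_v$, and perform the usual symmetrization between the row and column registers. Combining the forward and inverse pieces with an ancillary reflection then yields the $(dH_{\max}, O(n + \log M), 0)$-block-encoding of $H$ with $O(1)$ queries to $\widetilde{\mathcal{O}}_s$ and $\widetilde{\mathcal{O}}_v$, hence $O(1)$ queries to $\mathcal{O}_s$ and $\mathcal{O}_v$.

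I do not expect a hard obstacle here: the content of the lemma is essentially that the time-dependent case slots into the sparse block-encoding framework by treating the clock register as an inert control. The one place where care is needed is the amplitude-loading step, which has to be implemented to precision sufficient for the outer algorithm; because each binary digit of $H_{jk}(t)$ contributes one controlled rotation, this costs $O(\log(1/\epsilon_{\mathrm{be}}))$ gates, which is absorbed into the stated $O(n)$ primitive-gate count once the internal precision is taken inverse-polynomial in the problem parameters. All remaining gate counts come from uniform-superposition preparation on $\log d$ qubits and basic arithmetic on the clock register, both of which are $O(\log M + \log d) = O(n)$ given the ambient sizes in \cref{oracle_inputblockencoding}.
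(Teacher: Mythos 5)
Your proposal is correct and is essentially the construction the paper relies on: the paper gives no proof of this lemma itself (it is imported verbatim from the cited reference), and the intended argument is exactly your reduction — view $\sum_m \ketbra{m}{m}\otimes H(mt/M)$ as a single $d$-sparse block-diagonal matrix, lift $\mathcal{O}_s$ and $\mathcal{O}_v$ by passing the clock register through (with the arithmetic relabeling $m \mapsto mt/M$), and apply the standard sparse-access block-encoding with normalization $d\,H_{\max}$. The only caveat worth flagging is that your closing claim $\log M = O(n)$ is not literally implied by the choice $M = \Theta(\dot H_{\max}T^3/\epsilon^2)$, but the lemma as stated in the paper makes the same simplification, so this does not affect the comparison.
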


Our algorithm depends on implementing linear combinations of block-encodings. This construction, first formulated in~\cite{LW23a}, is a slight generalization of linear combinations of unitaries.
\begin{lemma}[Linear combinations of block-encodings~{\cite[Lemma 5]{LW23a}}]
  \label{lemma:sum-to-be}
  Suppose $A \coloneqq \sum_{j=1}^m y_j A_j \in \bbc^{2^n\times 2^n}$, where $A_j \in \bbc^{2^n \times 2^n}$ and $y_j > 0$ for all $j \in \{1, \ldots m\}$. Let $U_j$ be an $(\alpha_j, a, \epsilon)$-block-encoding of $A_j$, and $L$ be a unitary acting on $b$ qubits (with $m \leq 2^b-1$) such that $L\ket{0} = \sum_{j=0}^{2^b-1}\sqrt{\alpha_jy_j/s}\ket{j}$, where $s = \sum_{j=1}^my_j\alpha_j$. Then a $(\sum_{j}y_j\alpha_j, a+b, \sum_{j}y_j\alpha_j\epsilon)$-block-encoding of $\sum_{j=1}^my_jA_j$ can be implemented with a single use of
  \begin{align}
      \mathrm{select}(U) \coloneqq \sum_{j=0}^{m-1}\ketbra{j}{j}\otimes U_j + ((I - \sum_{j=0}^{m-1}\ketbra{j}{j})\otimes I _{\bbc^{2^a}}\otimes I_{\bbc^{2^{n}}})
  \end{align} plus twice the cost for implementing $L$. 
\end{lemma}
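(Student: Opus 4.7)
The plan is to construct the claimed block-encoding explicitly as a three-stage circuit and verify correctness by a direct amplitude computation on an arbitrary input. Define
\[
W \coloneqq (L^\dagger \otimes I_{2^a}\otimes I_{2^n})\,\mathrm{select}(U)\,(L \otimes I_{2^a}\otimes I_{2^n}),
\]
which acts on $b + a + n$ qubits. By inspection $W$ makes one call to $\mathrm{select}(U)$ together with two calls to $L$ (one as $L$, one as $L^\dagger$), matching the stated gate cost. What remains is to show that $W$ is an $(s,a+b,s\epsilon)$-block-encoding of $A$ with $s = \sum_j y_j\alpha_j$.

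To check this, I would track the action of $W$ on $|0\rangle^b|0\rangle^a|\psi\rangle$ for arbitrary $|\psi\rangle$, padding $y_j \coloneqq 0$ for indices $j$ outside $\{1,\dots,m\}$ so that the three index ranges (the preparation performed by $L$, the sum defining $\mathrm{select}(U)$, and the coefficients $\{y_j\}$) are consistent. After $L$, the $b$-qubit register is in $\sum_j \sqrt{\alpha_j y_j/s}\,|j\rangle$. After $\mathrm{select}(U)$, the joint state becomes $\sum_j \sqrt{\alpha_j y_j/s}\,|j\rangle\, U_j(|0\rangle^a|\psi\rangle)$, since on the padded branches $\mathrm{select}(U)$ acts as the identity but those branches carry zero amplitude. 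Using $(\langle 0|^a \otimes I)\,U_j\,(|0\rangle^a \otimes I) = (A_j + E_j)/\alpha_j$ with $\|E_j\| \leq \epsilon$ (the block-encoding definition), I would decompose
\[
U_j(|0\rangle^a|\psi\rangle) = |0\rangle^a \otimes \tfrac{1}{\alpha_j}(A_j + E_j)|\psi\rangle + |\perp_j\rangle,
\]
where $|\perp_j\rangle$ is orthogonal to $|0\rangle^a$ on the ancilla register.

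Projecting the output of $W$ onto $\langle 0|^b \otimes \langle 0|^a$ annihilates every $|\perp_j\rangle$; using $\langle 0|^b L^\dagger |j\rangle = \sqrt{\alpha_j y_j/s}$ (the adjoint of $L|0\rangle^b$), the amplitudes collapse to
\[
(\langle 0|^{b+a}\otimes I)\,W\,(|0\rangle^{b+a}\otimes I) \;=\; \frac{1}{s}\sum_j y_j\,(A_j + E_j) \;=\; \frac{A}{s} + \frac{1}{s}\sum_j y_j E_j.
\]
Multiplying by $s$ and applying the triangle inequality gives $\|A - s\,(\langle 0|^{b+a}\otimes I)\,W\,(|0\rangle^{b+a}\otimes I)\| = \|\sum_j y_j E_j\| \leq \epsilon \sum_j y_j \leq \epsilon \sum_j y_j\alpha_j = s\epsilon$, where the last inequality uses $\alpha_j \geq 1$, a standard convention for block-encodings that follows from the sub-block of a unitary having norm at most $1$. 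This yields precisely the $(s,a+b,s\epsilon)$-block-encoding condition on $A$.

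Because each step is a routine calculation, there is no substantive obstacle here; the proposal is almost purely bookkeeping. The only mildly delicate points to watch are (i) the index-padding that reconciles the three different index ranges appearing in the statement, handled by declaring $y_j = 0$ outside $\{1,\dots,m\}$ so that the corresponding amplitudes vanish automatically, and (ii) the step $\sum_j y_j \leq s$, which invokes the normalization convention $\alpha_j \geq 1$ already implicit in how block-encodings are used throughout the paper.
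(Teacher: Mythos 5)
Your proof is correct, but note that the paper itself offers no proof of this lemma to compare against: it is imported by citation from \cite[Lemma~5]{LW23a}, which is why no argument for it appears in the appendices. Your three-stage construction $W = (L^\dagger \otimes I)\,\mathrm{select}(U)\,(L \otimes I)$ with the projection calculation is exactly the standard LCU argument underlying the cited result, and your bookkeeping (padding $y_j = 0$ to reconcile the mismatched index ranges in the statement, counting $L$ and $L^\dagger$ as the two uses of $L$) is sound. The one point worth flagging, which you correctly isolate rather than gloss over, is the final inequality: the computation gives the unconditional error bound $\bigl(\sum_j y_j\bigr)\epsilon$, and promoting this to the stated $s\epsilon = \bigl(\sum_j y_j \alpha_j\bigr)\epsilon$ genuinely requires $\alpha_j \geq 1$, which does \emph{not} follow from the block-encoding definition alone (a unitary's sub-block has norm at most $1$, so one only gets $\alpha_j \geq \norm{A_j} - \epsilon$, which can be less than $1$ for small-norm $A_j$). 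So your parenthetical justification ``follows from the sub-block of a unitary having norm at most $1$'' is not quite right as stated; it is an additional normalization assumption. In the present paper this is harmless, since every subnormalization actually used is of the form $\alpha = dH_{\max}$ or $\alpha^k$ in regimes where $\alpha \geq 1$, and in fact your argument proves the slightly stronger $\bigl(\sum_j y_j, a+b, \bigl(\sum_j y_j\bigr)\epsilon\bigr)$-type error bound, of which the lemma's statement is a weakening under that convention.
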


\subsection{Gaussian quadrature}
\label{sec:def_gaussian}

We are using the Gauss-Legendre quadrature method in this work. The interpolation nodes for Gauss-Legendre quadrature are $-1\leq x_1\leq \cdots \leq x_q\leq 1$ being the roots of $q$-th degree Legendre polynomial, and $w_1,\ldots ,w_q$ are their corresponding weights. 
Then the weight for the $j$-th root of the Gaussian quadrature can be expressed as
\begin{equation}
\label{definition_weights}
   w_j = \frac{2}{(1-x_j^2)P'_n(x_j)^2}.
\end{equation}
Here $P'_n$ is the derivative of Legendre function.
Both weights and roots can be pre-computed and are independent of the integrand. Let $\{{P}_i(x) \}$ be the Legendre polynomials.
By scaling
   $\hat{P}\coloneqq P\left(\frac{2x}{t} - 1\right)$,
we have that $\{\hat s_i = \frac{2}{x_i} - 1\}^q_{i=1}$ are the roots of the scaled $q$-th degree polynomial $\hat{{P}_q}$. 
For one dimensional integral approximation, that is in approximating $\int_0^t f(x)\mathrm{d}x$ using $\sum_{j=1}^qf(\hat s_j)w_j$, the error for the Gaussian quadrature approximation is
\begin{equation}
   E_q[f] = \int^t_0 f(x) \dd x - \sum^q_{j=1} f(\hat{s}_j)w_j = \frac{f^{(2q)}(\xi)}{(2q)!} \int^t_0 \pi_q(x)^2 \dd x,
\end{equation}
for some $\xi \in[0,t]$, where
    $\pi_q(x) \coloneqq (x-\hat s_1)(x-\hat s_2) \cdots (x- \hat s_q)$.
To estimate the integral term in the error, we notice that
\begin{align}
    \pi_q(x) = \frac{t^q q!}{2^q(2q-1)!!} P_q\left(\frac{2x}{t} - 1\right).
\end{align}
This gives an explicit bound
\begin{equation}
   |E_q[f]| = \frac{|f^{(2q)}|(\xi)t^{2q+1}(q!)^2}{(2q)!2^{2q}(2q-1)!!(2q+1)!!}\leq \frac{|f^{(2q)}|(\xi)t^{2q+1}q}{(2q)!2^{4q-1}}.
\end{equation}

In our algorithm, we need to approximate a multiple integral where each level is a scaled Gaussian quadrature. We present the details in \cref{sec:approximate_gaussian}.
In the following, we present several useful lemmas related to Gaussian quadrature weights and roots, whose proofs are postponed to \cref{sec:proof-quadrature}.

\begin{restatable}{lemma}{rootspacing}
\label{lemma:spacing}
    Suppose $\{x_1, x_2, \dots,x_q\}$ are quadrature points for Gauss-Legendre quadrature over interval $[-1,1]$, then it holds that
    \begin{align}
    \label{eq3}
        |x_{j + 1} - x_j| = \frac{\pi}{q} \sqrt{1-x_j^2} + O(1/q^2),
    \end{align}
    for all $j \in \{1,2,\dots,q-1\}$.
\end{restatable}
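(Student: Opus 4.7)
The plan is to reduce the claim to the classical asymptotic expansion of the zeros of the Legendre polynomial $P_q$. Writing each root as $x_j = \cos\theta_j$ with $\theta_j \in (0,\pi)$, ordered so that $\theta_j$ is monotone in $j$, Bruns' formula---a direct consequence of the Darboux--Laplace asymptotics for $P_q(\cos\theta)$ (see, e.g., Szeg\H{o}'s monograph on orthogonal polynomials)---gives
\begin{align}
    \theta_j = \frac{(j - 1/4)\pi}{q + 1/2} + O\!\left(\frac{1}{q^2 \sin\theta_j}\right),
\end{align}
from which the angular spacing satisfies
\begin{align}
    \theta_{j+1} - \theta_j = \frac{\pi}{q + 1/2} + O(1/q^2) = \frac{\pi}{q} + O(1/q^2),
\end{align}
uniformly over any range of $j$ in which $\sin\theta_j$ is bounded away from $0$. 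I would quote this as a known fact rather than re-derive it.

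Given the angular spacing, the mean value theorem applied to $\cos$ furnishes some $\theta^\ast$ between $\theta_j$ and $\theta_{j+1}$ with
\begin{align}
    x_{j+1} - x_j = \cos\theta_{j+1} - \cos\theta_j = -\sin(\theta^\ast)\,(\theta_{j+1} - \theta_j).
\end{align}
Since $|\theta^\ast - \theta_j| = O(1/q)$ and $|\sin\theta^\ast - \sin\theta_j| \le |\theta^\ast - \theta_j|$, we have $\sin\theta^\ast = \sqrt{1-x_j^2} + O(1/q)$. Multiplying the angular gap $\pi/q + O(1/q^2)$ by this factor yields $|x_{j+1}-x_j| = \frac{\pi}{q}\sqrt{1-x_j^2} + O(1/q^2)$, which is exactly the stated claim.

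The main subtlety---and the step I would treat most carefully---is the boundary regime: when $j$ is within $O(1)$ of $1$ or $q$, $\sin\theta_j = O(1/q)$ and the Bruns remainder blows up, so the derivation above is only valid in the bulk. In that regime, however, the standard Bessel-function asymptotics for the extreme Legendre zeros give $1-x_j^2 = O(1/q^2)$ and $|x_{j+1}-x_j| = O(1/q^2)$, while the claimed main term $\frac{\pi}{q}\sqrt{1-x_j^2}$ is itself $O(1/q^2)$. Hence the asserted equality holds trivially in the boundary regime, with both sides absorbed into the error term. This boundary case-split is the only nontrivial point I anticipate; the rest is bookkeeping around cited asymptotics for Legendre zeros.
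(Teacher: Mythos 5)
Your proposal is correct and follows essentially the same route as the paper's proof: both invoke the Bruns-type asymptotic $\theta_j \approx \pi(j-1/4)/(q+1/2)$ for the Legendre zeros and then expand $\cos$ to first order (the paper via an explicit Taylor expansion of $\cos(y+\delta)$, you via the mean value theorem). Your explicit case-split for the boundary regime $j = O(1)$, where $\sin\theta_j = O(1/q)$ and the Bruns remainder degrades, is a point the paper's proof passes over silently, but it does not alter the method.
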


\begin{restatable}{lemma}{weightbound}
\label{lemma:weight}
    Suppose $\{x_1, x_2, \dots,x_q\}$ are quadrature points for Gauss-Legendre quadrature over interval $[-1,1]$, and $\{w_1, w_2, \dots,w_q\}$ are corresponding weights, then it holds that
    \begin{align}
    \label{eq4}
        w_j = \frac{\pi}{q} \sqrt{1-x_j^2} + O(1/q^2),
    \end{align}
    for all $j \in \{1,2,\dots,q\}$.
\end{restatable}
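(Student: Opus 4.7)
My plan is to reduce the asymptotic estimate for the weights to the classical asymptotic expansion of Legendre polynomials, starting from the exact formula $w_j = 2 / \bigl[(1-x_j^2)(P'_q(x_j))^2\bigr]$ already recorded in \eqref{definition_weights}. This reduces everything to estimating $P'_q(x_j)^2$ up to relative error $O(1/q)$, since the $(1-x_j^2)^{-1}$ factor is exact.

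The key analytic tool is the Darboux (Stieltjes) asymptotic expansion, which states that under the substitution $x = \cos\theta$,
\begin{equation*}
P_q(\cos\theta) \;=\; \sqrt{\frac{2}{\pi q \sin\theta}}\,\cos\!\left(\bigl(q+\tfrac{1}{2}\bigr)\theta - \tfrac{\pi}{4}\right) \;+\; O\!\left(q^{-3/2}\right),
\end{equation*}
valid uniformly on compact subsets of $(0,\pi)$. First I would differentiate this relation using $\tfrac{d}{dx} = -\tfrac{1}{\sin\theta}\tfrac{d}{d\theta}$. Among the two terms produced, the one from differentiating the oscillatory $\cos$ carries an extra factor of $q+\tfrac12$ and dominates, while the derivative of the envelope contributes only a $O(q^{-1/2})$ correction. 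At a root $x_j = \cos\theta_j$, the cosine vanishes, so $\sin\bigl((q+\tfrac12)\theta_j - \tfrac{\pi}{4}\bigr) = \pm 1$, and consequently
\begin{equation*}
P'_q(x_j)^2 \;=\; \frac{2(q+\tfrac{1}{2})^2}{\pi q\,\sin^3\theta_j}\bigl(1+O(1/q)\bigr).
\end{equation*}
Substituting into the weight formula and using $\sin\theta_j = \sqrt{1-x_j^2}$ gives
\begin{equation*}
w_j \;=\; \frac{\pi q\,\sqrt{1-x_j^2}}{(q+\tfrac{1}{2})^2}\bigl(1+O(1/q)\bigr) \;=\; \frac{\pi}{q}\sqrt{1-x_j^2}\;+\;O(1/q^2),
\end{equation*}
where the second equality uses $q/(q+\tfrac12)^2 = \tfrac{1}{q} + O(1/q^2)$.

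The main obstacle is handling the extreme roots $x_1$ and $x_q$, since the Darboux expansion degrades as $\theta \to 0$ or $\pi$. I would address this in two ways. For interior roots (say $\theta_j$ bounded away from $0, \pi$ by $c/q$ for a fixed small $c$), the expansion above is uniform and the argument is essentially immediate after bookkeeping the error terms. For the boundary roots, classical results of Szegő give $1 - |x_j| = \Theta(1/q^2)$ when $j \in \{1, q\}$, so $\sqrt{1-x_j^2} = \Theta(1/q)$ and both the true weight and the claimed approximant are $O(1/q^2)$; hence the additive error bound $O(1/q^2)$ is trivially compatible there. Alternatively, one may cite Szegő's uniform asymptotic formula in terms of Bessel functions, which yields the same leading behaviour up to $\theta = 0$. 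Combining the interior and boundary regimes gives the stated uniform bound, completing the proof.
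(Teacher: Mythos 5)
Your proposal is correct and follows essentially the same route as the paper's proof: both start from the exact weight formula \eqref{definition_weights}, invoke the Darboux/Stieltj\`es asymptotic expansion of $P_q(\cos\theta)$ (the paper cites Szeg\H{o}, Theorem 8.21.2), differentiate it, use that the cosine factor vanishes at a root so the accompanying sine is $\pm 1$, and substitute back. Your additional care with the extreme roots $x_1, x_q$, where the expansion degrades, is a point the paper's proof silently glosses over, but it does not change the argument.
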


The following Lemma gives a way to efficiently estimate the partial sum $\sum_{j=a}^b f(\hat{x}_j) w_j$.

\begin{restatable}{lemma}{partialsum}
\label{lemma:partial-sum}
Suppose $\{x_1, x_2, \dots,x_q\}$ are quadrature points for Gauss-Legendre quadrature over interval $[-1,1]$, and $\{w_1, w_2, \dots,w_q\}$ are their corresponding weights, $q$ is the total number of quadrature points, and the first order derivative of $f(x)$ is bounded by
$|f'(x)| \leq l$, then it holds that
\begin{align}
     \left|\sum_{j=a}^b f(\hat{x}_j) w_j - \int_{\hat x_a}^{\hat x_b} f(x) \mathrm{d}x \right| \leq O(l(b-a)/q^2).
\end{align}
\end{restatable}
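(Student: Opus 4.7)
The plan is to relate the weighted Gauss-Legendre sum to a Riemann sum on the non-uniform grid $\{\hat{x}_j\}$, and then relate that Riemann sum to the true integral, using the asymptotic forms of $w_j$ and $\hat{x}_{j+1}-\hat{x}_j$ supplied by the two preceding lemmas. Concretely, I would insert an intermediate quantity and write
\begin{align}
    \sum_{j=a}^{b} f(\hat{x}_j)w_j - \int_{\hat{x}_a}^{\hat{x}_b} f(x)\mathrm{d}x
    = \underbrace{\sum_{j=a}^{b} f(\hat{x}_j)\bigl(w_j - (\hat{x}_{j+1}-\hat{x}_j)\bigr)}_{(A)}
    + \underbrace{\sum_{j=a}^{b-1} f(\hat{x}_j)(\hat{x}_{j+1}-\hat{x}_j) - \int_{\hat{x}_a}^{\hat{x}_b} f(x)\mathrm{d}x}_{(B)},
\end{align}
treating the endpoint term $f(\hat{x}_b)w_b$ as a boundary correction to be absorbed at the end.

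For term $(B)$, I would invoke the standard first-order Riemann-sum estimate: since $|f'|\leq l$, for each $j$,
\begin{align}
    \left|f(\hat{x}_j)(\hat{x}_{j+1}-\hat{x}_j) - \int_{\hat{x}_j}^{\hat{x}_{j+1}} f(x)\mathrm{d}x\right| \leq \tfrac{l}{2}(\hat{x}_{j+1}-\hat{x}_j)^2,
\end{align}
and by \cref{lemma:spacing} the spacing $\hat{x}_{j+1}-\hat{x}_j$ is $O(1/q)$, so each subinterval contributes $O(l/q^2)$. Summing over the $b-a$ subintervals yields $(B) = O(l(b-a)/q^2)$.

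For term $(A)$, combine \cref{lemma:spacing,lemma:weight}: both $w_j$ and $\hat{x}_{j+1}-\hat{x}_j$ equal $\frac{\pi}{q}\sqrt{1-\hat{x}_j^2} + O(1/q^2)$, so their difference is $O(1/q^2)$ term-by-term, giving $(A) = O(\|f\|_\infty (b-a)/q^2)$. Since $f$ is $l$-Lipschitz on the bounded reference interval, $\|f\|_\infty$ is controlled by $l$ together with a fixed boundary value, which I would fold into the implicit constant. Combining $(A)$ and $(B)$ and absorbing the single boundary term $f(\hat{x}_b)w_b = O(l/q)$ into the regime $b-a = \Omega(1)$ where the lemma is applied produces the claimed $O(l(b-a)/q^2)$ bound.

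The hard part is precisely the treatment of $(A)$: the $O(1/q^2)$ correction in the weight-versus-spacing asymptotics is an absolute error rather than one proportional to $|f'|$, so the naive estimate contributes $\|f\|_\infty (b-a)/q^2$ instead of $l(b-a)/q^2$. The cleanest resolution is to let $l$ simultaneously denote the Lipschitz constant and an $L^\infty$ bound on $f$; a sharper alternative would be to telescope the differences $w_j-(\hat{x}_{j+1}-\hat{x}_j)$ against increments $f(\hat{x}_{j+1})-f(\hat{x}_j)$ via summation by parts to extract an additional factor of $|f'|$ and avoid appealing to $\|f\|_\infty$ at all.
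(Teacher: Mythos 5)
Your proposal is correct and follows essentially the same route as the paper: replace each weight $w_j$ by the spacing $\hat{x}_{j+1}-\hat{x}_j$ via the two preceding asymptotic lemmas, then bound the resulting left-endpoint Riemann sum against the integral using $|f'|\leq l$. The subtlety you flag in term $(A)$ --- that the $O(1/q^2)$ weight-versus-spacing discrepancy yields a contribution proportional to $\|f\|_\infty$ rather than to $l$ --- is genuine, but the paper's own proof simply absorbs $f(\hat{x}_j)$ into its $O((b-a)/q^2)$ term without comment, so your explicit handling of it (and of the boundary term) is if anything more careful than the original.
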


\section{Simpler discrete-clock construction}
\label{sec:clock}

Our approach redefines the target state \( |\psi(t)\rangle \) as part of an open system coupled with an external ``clock state'' \( |t\rangle \), producing the composite state \( |t\rangle \otimes |\psi(t)\rangle \) upon evolution. To achieve this, we design a Hamiltonian for the coupled system with two components: a clock Hamiltonian that governs the evolution of the clock state and a system Hamiltonian that simulates time-dependent dynamics on the system state based on the clock's values.

To define the clock Hamiltonian \( H_{\mathrm{clk}} \), we start with the informal notion of a continuous clock space, where the evolution of the clock Hamiltonian for time $\tau$ performs the operation
\begin{align}
\label{clock_def0}
    |t\rangle \mapsto |t + \tau\rangle.
\end{align}
This operator resembles the increment operator \( U_+ \), defined as $U_+ |t\rangle = |t+1\rangle$.
This similarity suggests defining \( H_{\mathrm{clk}} \) as $H_{\mathrm{clk}} \sim \log(U_+)$. To make the increment operator unitary, we also make $U_+$ satisfy $U_+ |M-1\rangle = |0\rangle$, here $M$ is the total number of discrete ladder state. 

Let $M$ be the dimension of our discretized clock space and $\delta$ be the time step size satisfying $\delta = T/M$
We formally define the clock Hamiltonian $H_{\mathrm{clk}}$ as
    \begin{align}
        H_{\mathrm{clk}} \coloneqq H_a \otimes I,
    \end{align}
    where $H_a$ satisfies
    \begin{align}
      -i H_a \delta = \log U_+ = Q_{\mathbb{Z}_M} \sum_{x\in \mathbb{Z}_M} (-i2\pi x/M) |x\rangle \langle x| Q_{\mathbb{Z}_M}^{\dagger},
    \end{align}
    where $Q_{\mathbb{Z}_M}$ denotes the quantum Fourier transform on $\mathbb{Z}_M$. In the following, when the domain of the quantum Fourier transform is clear from the context, we may drop the subscript and simply use $Q$ for a concise presentation.
    The system Hamiltonian $H_{\mathrm{sys}}$ is defined as
    \begin{align}
        H_{\mathrm{sys}} \coloneqq \sum_{n=0}^{M -1} |n \rangle \langle n| \otimes H(n\delta),
    \end{align}
    where $H(\cdot)$ is the time-dependent Hamiltonian we want to simulate. The following lemma characterizes the connection between the evolution of $H_{\mathrm{clk}}+H_{\mathrm{sys}}$ 

\begin{lemma}\label{lm1}
  If suffices to choose $M$ to be
    \begin{align}
        M = O\left(\frac{T^4}{\epsilon^2} \max_s \|\dot{H}(s)\| \right).
    \end{align}
    so that
    \begin{align}
        \left\|e^{-i (H_{\mathrm{clk}} + H_{\mathrm{sys}} )T}|0\rangle |\psi\rangle - I \otimes \mathcal{T}\left( e^{-i\int_0^T H(t) \mathrm{d}t}\right) |T\rangle |\psi\rangle\right\| \leq \epsilon,
    \end{align}
    for any state $\ket{\psi}$.
\end{lemma}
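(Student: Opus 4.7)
The plan is a Duhamel comparison between the true evolution $U(t) \coloneqq e^{-i(H_{\mathrm{clk}}+H_{\mathrm{sys}})t}$ and a factorized proxy trajectory that decouples the clock and system dynamics. I would define
\begin{align*}
  |\tilde\Phi(t)\rangle \coloneqq \bigl(e^{-iH_{\mathrm{clk}}t}|0\rangle\bigr) \otimes U_{\mathrm{sys}}(t)|\psi\rangle, \qquad U_{\mathrm{sys}}(t) \coloneqq \mathcal{T} e^{-i\int_0^t H(\tau)\,d\tau}.
\end{align*}
Then $|\tilde\Phi(0)\rangle = |0\rangle|\psi\rangle$, and because the eigenvalues of $H_a\delta$ are $2\pi k/M$ so that $e^{-iH_{\mathrm{clk}}T}=e^{-iH_a M\delta}=U_+^M=I$, we have $|\tilde\Phi(T)\rangle = |0\rangle \otimes U_{\mathrm{sys}}(T)|\psi\rangle$, matching the target on the right-hand side once $|T\rangle$ is identified with $|0\rangle$ on the cyclic clock.

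Differentiating yields $i\partial_t |\tilde\Phi(t)\rangle = (H_{\mathrm{clk}} + I\otimes H(t))|\tilde\Phi(t)\rangle$, so the Duhamel identity gives
\begin{align*}
  \bigl\|U(T)|0\rangle|\psi\rangle - |\tilde\Phi(T)\rangle\bigr\| \leq \int_0^T \bigl\|(H_{\mathrm{sys}} - I\otimes H(t))|\tilde\Phi(t)\rangle\bigr\|\, dt.
\end{align*}
Writing $e^{-iH_a t}|0\rangle = \sum_n c_n(t)|n\rangle$, the Fourier diagonalization of $H_a$ against $|0\rangle = M^{-1/2}\sum_k |\tilde k\rangle$ gives the discrete Dirichlet kernel $|c_n(t)|^2 = \sin^2(\pi(n-t/\delta))/(M^2\sin^2(\pi(n-t/\delta)/M))$. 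Using the Lipschitz bound $\|H(n\delta)-H(t)\|\leq \dot H_{\max}|n\delta-t|$ then controls the integrand by $\dot H_{\max}\sqrt{\sum_n |c_n(t)|^2(n\delta-t)^2}$.

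The key technical step is bounding this clock-state spread. I would apply Jordan's inequality $\sin(\pi v/M)\geq 2|v|/M$ for $|v|\leq M/2$ to the cyclic representative $v = (n-t/\delta)\bmod M \in [-M/2, M/2)$, obtaining a term-by-term estimate $v^2 |c_n(t)|^2\leq 1/4$, and hence a second-moment bound of $O(M)$ in $\delta^{-2}$-units, i.e., $O(T^2/M)$ in time units. This yields an integrand of size $O(\dot H_{\max}T/\sqrt M)$, a total error of $O(\dot H_{\max}T^2/\sqrt M)$, and thus an error $\leq \epsilon$ for $M = \Omega(T^4\dot H_{\max}^2/\epsilon^2)$, matching the stated scaling.

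The main obstacle is precisely this spread bound: the Dirichlet kernel is not well-localized, and the cyclic structure of $\mathbb{Z}_M$ means that when $t/\delta$ is near the boundary of the clock cycle, the kernel retains appreciable weight at positions on the opposite end, which are close in cyclic distance but far in \emph{linear} distance---and linear distance is what $H$ sees, since $H(\cdot)$ lives on $[0,T]$ and is not periodic. These wrap-around amplitudes have to be handled separately from the bulk; a direct computation shows that the boundary regions of width $O(\delta)$ contribute only an $O(T^2\dot H_{\max}/M)$ subleading term to the integral, so the bulk-region estimate dominates and delivers the claimed $M$ bound. The remainder of the argument is routine manipulation of Duhamel's identity and the explicit Dirichlet-kernel formula.
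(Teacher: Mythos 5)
Your argument is correct and reaches the right bound, but by a genuinely different route from the paper. The paper proceeds in two stages: it first replaces $e^{-i(H_{\mathrm{clk}}+H_{\mathrm{sys}})T}$ by the product formula $(e^{-iH_{\mathrm{clk}}T/M}e^{-iH_{\mathrm{sys}}T/M})^M$, whose action on $\ket{0}\ket{\psi}$ is a Riemann-sum approximation of the time-ordered exponential with error $O(T^2\dot H_{\max}/M)$, and then bounds the Trotter error by noting that $[H_{\mathrm{clk}},H_{\mathrm{sys}}]$ applied to a single clock basis state has norm only $O(\sqrt{M}\,\dot H_{\max})$, giving a total Trotter error $O(T^2\dot H_{\max}/\sqrt{M})$. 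You instead make one Duhamel comparison against the exactly factorized trajectory $\bigl(e^{-iH_a t}\ket{0}\bigr)\otimes U_{\mathrm{sys}}(t)\ket{\psi}$, so the discretization error and the clock--system coupling are absorbed into a single integral controlled by the second moment of the Dirichlet-kernel clock state; your bound $\sum_n (n\delta-t)^2\abs{c_n(t)}^2=O(T^2/M)$ plays exactly the role of the paper's $O(\sqrt{M}\,\dot H_{\max})$ commutator estimate (both rest on the same $\ell^2$-versus-$\ell^1$ behavior of the clock kernel), and both yield $O(T^2\dot H_{\max}/\sqrt{M})$, hence $M=O(T^4\dot H_{\max}^2/\epsilon^2)$; the extra power of $\dot H_{\max}$ relative to the lemma's display also follows from the paper's own intermediate bound, so do not be troubled by it. Your route is arguably cleaner---it avoids the product formula and the informal $O(\delta^3)$ expansion, and makes explicit that the error is governed by the spread of the evolved clock state---while the paper's is more elementary in that it never needs the explicit Dirichlet-kernel asymptotics. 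One correction to your wrap-around discussion: the terms where cyclic and linear distance disagree occur for essentially every $t$, not only in boundary windows of width $O(\delta)$, and a direct estimate gives them a second-moment contribution of order $T^2\delta/t$ near the left edge (symmetrically near the right), which upon integrating $\sqrt{T^2\delta/t}$ over $[0,T]$ contributes $O(\dot H_{\max}T^2/\sqrt{M})$---the \emph{same} order as the bulk rather than the subleading $O(T^2\dot H_{\max}/M)$ you claim. The final bound is unaffected, but that sentence should be fixed. Your reading of $\ket{T}$ as $\ket{0}$ on the cyclic clock is the correct interpretation and is implicit in the paper's product-formula step as well.
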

\begin{proof}
    
    For simplicity, we use 
    \begin{align}
        U(T,0) = \mathcal{T}\left( e^{-i\int_0^T H(t) \mathrm{d}t}\right) 
    \end{align}
    to denote the evolution operator.
By the definition of integral and Riemann sum, one has
    \begin{align}
        \|(e^{-iH_{\mathrm{clk}}T/M} e^{-iH_{\mathrm{sys}}T/M})^M |0\rangle |\psi(0)\rangle - |T\rangle U(T,0) |\psi(T)\rangle\| \leq \frac{T^2}{2M} \max_s \|\dot{H}(s)\|.
    \end{align}
To make the above error within $\epsilon$, it suffices to choose
\begin{align}
\label{first_choice_of_N}
    M = O\left(\frac{T^2}{\epsilon}\max_s \|\dot{H}(s)\|\right)
\end{align}
Now we need to bound the error for
    \begin{align}
        \|e^{-i(H_{\mathrm{clk}} + H_{\mathrm{sys}}) T} |0\rangle |\psi(0)\rangle - (e^{-iH_{\mathrm{clk}}T/M} e^{-iH_{\mathrm{sys}}T/M})^M |0\rangle |\psi(0)\rangle\| \leq \epsilon_0
    \end{align}
To bound $\epsilon_0$, we first consider a shorter time period $\delta = T/M$ and the corresponding error for this shorter period $\epsilon_1$. During such time period $\delta$, we have
\begin{align}
    e^{-i(H_{\mathrm{clk}} + H_{\mathrm{sys}})\delta} \approx e^{-iH_{\mathrm{clk}} \delta} e^{-i H_{\mathrm{sys}} \delta} \left(1 - \frac{\delta^2}{2}[H_{\mathrm{clk}},H_{\mathrm{sys}}] + O(\delta^3)\right).
\end{align}
One may want to bound the spectral error of the above operator as
\begin{align}
    \epsilon_1 = O\left(\frac{T^2}{2 M^2} \|[H_{\mathrm{clk}},H_{\mathrm{sys}}]\|\right).
\end{align}
However, when we were acting on states with a single clock state $|t\rangle |\psi\rangle$, we can bound the error better. Consider we are acting on $|t_0\rangle |\psi\rangle$, the error term will be
\begin{align}
    \|[H_{\mathrm{clk}},H_{\mathrm{sys}}] |t_0\rangle |\psi\rangle\| 
    &= \left\|\frac{1}{\delta N} \sum_{t}^{M-1} \sum_{m=0}^{M-1} \omega_r^{m(t-t_0)} \cdot \frac{2\pi m}{M} |t\rangle \otimes (H(t_0\delta) - H(t\delta))|\psi\rangle \right\|\\
    &= O\left( \left\|\sum_{t=0}^{M-1} |t\rangle \otimes \left(\frac{H(t_0 \delta) - H(t \delta)}{t_0 \delta - t\delta}\right)|\psi\rangle\right\|\right).
\end{align}
This can be bounded by $O(\sqrt{M} \max_s \|\dot{H}(s)\|)$. Therefore, $\epsilon_1$ can be bounded by 
\begin{align}
    \epsilon_1 = \|e^{-iH_{\mathrm{clk}} \delta} e^{-i H_{\mathrm{sys}} \delta} \cdot \frac{\delta^2}{2}[H_{\mathrm{clk}},H_{\mathrm{sys}}] |t_0\rangle |\psi\rangle\| =  O\left(\frac{T^2}{2 M^{3/2}} \max_s \|\dot{H}(s)\|\right).
\end{align}
Combing $M$ terms together, the total error $\epsilon_0$ is then 
\begin{align}
    \epsilon_0 = O\left(\frac{T^2}{2 M^{1/2}} \max_s \|\dot{H}(s)\|\right).
\end{align}
To make $\epsilon_0 \leq \epsilon$, it suffices to choose the total number of time steps $M$ to be
\begin{align}
\label{eq:order_of_M}
    M = O\left(\frac{T^4}{\epsilon^2} \max_s \|\dot{H}(s)\| \right).
\end{align}
Combining with \cref{first_choice_of_N}, it suffices to choose \cref{eq:order_of_M}
\end{proof}

By choosing $M$ as \cref{eq:order_of_M}, the norm of $H_{\mathrm{clk}}$ is bounded by
\begin{align}
    \|H_{\mathrm{clk}}\| = \|H_a\| = \frac{1}{\delta} = \frac{M}{T} = O\left(\frac{T^3}{\epsilon^2} \max_s \|\dot{H}(s)\| \right).
\end{align}
After applying quantum Fourier transform $Q$ on both sides of $H_{\mathrm{clk}}$, the resulting operating, denoted by $D$ is 
\begin{align}
\label{eq:def_operator_D}
    D \coloneqq (Q^{\dagger} \otimes I)H_{\mathrm{clk}} (Q\otimes I) = \frac{M}{T}\sum_{x\in \mathbb{Z}_M} (2\pi x/M) |x\rangle \langle x| \otimes I.
\end{align}
Note that $D$ is a diagonal matrix whose exponential \( e^{-iDT} \) can be efficiently implemented. The norm of $D$ is upper bounded by
\begin{align}
  \label{eq:D_norm}
  \|D\| = \|H_{\mathrm{clk}}\| = O\left(\frac{T^3}{\epsilon^2} \dot{H}_{\max} \right).
\end{align}
Since
\begin{align}
  QQ^{\dagger} e^{-i(H_{\mathrm{sys}} + {H}_{\mathrm{clk}}) T}  Q Q^{\dagger} 
     = Q e^{-i(Q^{\dagger} H_{\mathrm{sys}} Q + Q^{\dagger} {H}_{\mathrm{clk}} Q)T} Q^{\dagger} = Q e^{-i(Q^{\dagger}H_{\mathrm{sys}} Q + D)T} Q^{\dagger},
\end{align}
the system Hamiltonian in the Fourier basis, denoted by $B$, is given as
\begin{align}
\label{eq:operator_FHF}
    B \coloneqq (Q^{\dagger} \otimes I )H_{\mathrm{sys}} (Q \otimes I).
\end{align}
Now, it follows that
\begin{align}
    (Q \otimes I) e^{-i(B + D)T} (Q^{\dagger} \otimes I) = e^{-i(H_{\mathrm{sys}} + {H}_{\mathrm{clk}}) T}.
\end{align}
 Since the Fourier transform operation is unitary, one has
\begin{align}
    \|B\|_{\max} = \|H_{\mathrm{sys}}\|_{\max} = H_{\max},
\end{align}
For the rest of this paper, we consider how to efficiently implement \( e^{-i(D + B)T} \).

\section{Simulation algorithm}

\subsection{Treatment by Duhamel's principle}

Next, let us see how to efficiently implement $e^{-i(D+B)t}$ \footnote{Note that this $t$ here in $e^{-i(D + B) t}$ should not be confused with $T$ we defined previously, which denotes the total simulation time. This $t$ is in $[0,T]$.} with Duhamel's principle and Gaussian quadrature. 
\begin{restatable}{lemma}{duhamel}
    Given matrices $D$ and $B$ of the same dimension, for all $t>0$, it holds that
    \begin{align}
        e^{-i(D + B)t} = & e^{-iDt} + (-i) \int_0^t e^{-iD(t-s)} B e^{-i(D + B) s} \ \mathrm{d}s.
    \end{align}
\end{restatable}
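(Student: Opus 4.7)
The plan is to verify the identity by constructing an auxiliary matrix-valued function whose derivative reproduces the integrand, then applying the fundamental theorem of calculus. Both sides of the claimed identity are well-defined since $D$ and $B$ are finite-dimensional, so no convergence subtleties arise beyond standard matrix exponential calculus.

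Concretely, I would define $f: [0,t] \to \bbc^{N \times N}$ by
\begin{align}
    f(s) \coloneqq e^{-iD(t-s)}\, e^{-i(D+B)s}.
\end{align}
The endpoint values are $f(0) = e^{-iDt}$ and $f(t) = e^{-i(D+B)t}$, so $f(t) - f(0)$ is exactly the left-hand side minus the first term on the right-hand side. Thus it suffices to show that $f'(s) = -i\, e^{-iD(t-s)}\, B\, e^{-i(D+B)s}$, since then integrating from $0$ to $t$ and applying the fundamental theorem yields the claim.

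The derivative computation goes via the Leibniz rule. The first factor satisfies $\frac{d}{ds} e^{-iD(t-s)} = iD\, e^{-iD(t-s)} = e^{-iD(t-s)}\, iD$, using that $D$ commutes with any analytic function of itself, in particular with $e^{-iD(t-s)}$. The second factor satisfies $\frac{d}{ds} e^{-i(D+B)s} = -i(D+B)\, e^{-i(D+B)s}$. Combining,
\begin{align}
    f'(s) = iD\, e^{-iD(t-s)}\, e^{-i(D+B)s} - i\, e^{-iD(t-s)}(D+B)\, e^{-i(D+B)s}.
\end{align}
The two $iD$ contributions cancel (again by commutativity of $D$ with $e^{-iD(t-s)}$), leaving exactly $f'(s) = -i\, e^{-iD(t-s)}\, B\, e^{-i(D+B)s}$, as desired.

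There is no real obstacle here: the identity is the standard variation-of-constants formula for matrix ODEs, and the only point that requires even momentary care is the bookkeeping of signs and the commutation of $D$ with its own exponential. I would not need any assumption that $D$ and $B$ commute with each other, which is crucial since in our setting $D$ and $B$ very much do not commute.
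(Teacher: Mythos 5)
Your proof is correct and is essentially the same argument as the paper's: the paper invokes Duhamel's principle (variation of constants) and notes that it suffices to check both sides solve the same first-order linear ODE, while you make that verification self-contained by differentiating the interpolating function $f(s) = e^{-iD(t-s)}e^{-i(D+B)s}$ and applying the fundamental theorem of calculus. The sign bookkeeping and the cancellation of the $D$ terms are handled correctly, and no commutativity of $D$ with $B$ is needed, as you note.
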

To prove the above lemma, it suffices to check that both sides solve the same first-order linear system. For completeness, we present the proof in \cref{sec:proof-technical}.
Then we recursively substitute the expression of $e^{-i (D + B)s}$ in the integrand to obtain the following 
\begin{align}
    \label{eq:2variable_expansion}
    e^{-i(D + B)t} = & e^{-iDt} + (-i) \int_0^t e^{-iD(t-s_1)} B e^{-iD s_1} \, \mathrm{d}s_1 \\
    &+ (-i^2) \int_0^t \int_0^{s_1} e^{-iD(t-s_1)} B e^{-iD(s_1 - s_2)} B e^{-iD s_2} \, \mathrm{d}s_2 \, \mathrm{d}s_1 \\
    &+ (-i)^3 \int_0^t \int_0^{s_1} \int_0^{s_2} e^{-iD(t-s_1)} B e^{-iD(s_1 - s_2)} B e^{-iD(s_2 - s_3)} B e^{-iD s_3} \, \mathrm{d}s_3 \, \mathrm{d}s_2 \, \mathrm{d}s_1 \\
    &+ \cdots \\
    &+ (-i)^K \int_0^t \int_0^{s_1}  \cdots \int_0^{s_{K-1}} e^{-iD(t-s_1)} B e^{-iD(s_1 - s_2)} B e^{-iD(s_2 - s_3)} \cdots B e^{-iD s_K} \, \mathrm{d}s_K \cdots \mathrm{d}s_1 \\
    &+ O\left(\frac{t^K}{K!} \|B\|^K\right),
\end{align}
where we have used
\begin{align}
    \|e^{-iD(t-s_1)} B e^{-iD(s_1 - s_2)} B e^{-iD(s_2 - s_3)} \cdots B e^{-iD s_K} \| = \|B\|^K,
\end{align}
and 
\begin{align}
    \int_0^t \int_0^{s_1} \int_0^{s_2} \cdots \int_0^{s_{K-1}} \, \mathrm{d}s_K \cdots \mathrm{d}s_3 \, \mathrm{d}s_2 \, \mathrm{d}s_1 \leq O\left(\frac{t^K}{K!}\right).
\end{align}
If we define
\begin{align}
\label{def:F_k}
    F_k(s_k, \dots, s_1) = e^{i D s_k} B e^{-i D(s_k - s_{k-1})} B \cdots B e^{-i D s_1},
\end{align}
then \cref{eq:2variable_expansion} can be rewritten as
\begin{align}
\label{eq:expansion_of_du}
    e^{-i(D + B)t} = e^{-iDt} \left(I + \sum_{k=1}^K (-i)^k \int_{0 \leq s_1 \leq \cdots \leq s_k \leq t} F_k(s_k, \dots, s_1) \, \mathrm{d}s_1 \cdots \mathrm{d}s_k\right) + O\left(\frac{t^K}{K!}\|B\|^K\right) .
\end{align}

As a side note, our approach is significantly different from the truncated Dyson series approach, which is outlined in the following approximation.
\begin{align}
\label{eq:dyson}
    \mathcal{T} \exp \left(-i \int_0^t H(\tau) \mathrm{d} \tau\right) = I + \sum_{k=1}^K (-i)^k \int_{0 \leq s_1 \leq \cdots \leq s_k \leq t}H(s_k)\cdots H(s_1)\mathrm{d}s_1 \cdots \mathrm{d}s_n + O\left(\frac{t^K}{K!}{(H_{\max})}^K\right),
\end{align}
where the integrand contains time-dependent Hamiltonians. On the other hand,
the nontrivial terms in \cref{eq:expansion_of_du} are time‑independent. The only time‑dependent contribution is the diagonal Hamiltonian $e^{-iDs}$, which can be fast-forwarded. 

\subsection{Approximating multiple integrals using scaled Gaussian quadrature}
\label{sec:approximate_gaussian}

To compute multiple integrals, we need to use scaled Gaussian quadrature multiple times. For an interval $[0,s_k]$ with $s_k\leq t$, we use scaled quadrature points and weights: $s_k x_1/t,\ldots,s_k x_q/t$, and $s_k w_1/t,\ldots,s_k w_q/t$. Then, $\int^{s_k}_0 f(x)\dd x$ can be approximated by the scaled quadrature points and weights:
\begin{equation}
   \int^{s_k}_0 f(x)\mathrm{d} x \approx \sum^q_{j=1}f\left(\frac{s_k x_j}{t}\right) \frac{s_k w_j}{t}.
\end{equation}
For each $j\in [q]$, define the functions $u_k$ and $v_k$ as
\begin{equation}
   u_j(x) \coloneqq x\hat{s}_j/t, \quad v_j(x) \coloneqq xw_j/t.
\end{equation}
We also define
\begin{equation}
\label{eq:def_multi_x}
   \hat{x}_{j_k}\coloneqq \hat{s}_{j_k},\text{   and   }\hat{x}_{(j_k,\ldots,j_{k-l})} \coloneqq u_{j_{k-l}} \circ \cdots \circ u_{j_{k-1}}(\hat{s}_{j_k}) \text{ for all }1\leq l\leq k-1,
\end{equation}
\begin{equation}
\label{eq:def_multi_w}
   \hat{w}_{(j_k)}\coloneqq w_{j_k}, \text{ and } \hat{w}_{(j_k,\ldots,j_{k-l})}\coloneqq v_{j_{k-l}}(\hat{x}_{(j_k,\ldots,j_{k-l+1})}) \text{ for all }1\leq l\leq k-1.
\end{equation}
Here $u_{j_1} \circ u_{j_2}(x) \coloneqq x \hat s_{j_1} \hat s_{j_2} /t$.
With these notations, we can approximate multiple-dimensional integrals
\begin{align}
    \int_{0\leq s_1 \leq \cdots \leq s_k \leq t} F(s_k,\dots,s_1) \mathrm{d}s_1 \cdots \mathrm{d}s_k
\end{align}
with 
\begin{align}
    \sum_{j_1 = 1}^q \cdots \sum_{j_k = 1}^q F (\hat{x}_{(j_k)}, \dots, \hat{x}_{(j_k, \dots, j_1)}) \hat{w}_{(j_k)} \cdots \hat{w}_{(j_k, \dots, j_1)},
\end{align}
where $F$ is a function with $k$ variables.

Now, for \cref{eq:expansion_of_du}, each term in this series can be approximated using Gaussian quadrature. Specifically, we approximate it by multiple-dimensional Gaussian quadrature
\begin{align}
\label{final_simulation_sum}
    e^{-iDt} \left(I + \sum_{k=1}^K (-i)^k \sum_{j_1 = 1}^q \cdots \sum_{j_k = 1}^q F_k (\hat{x}_{(j_k)}, \dots, \hat{x}_{(j_k, \dots, j_1)}) \hat{w}_{(j_k)} \cdots \hat{w}_{(j_k, \dots, j_1)}\right).
\end{align}
Since \( e^{-iDt} \) is straightforward to implement, we focus on implementing the series:
\begin{align}
    \label{sum_series}
    I + \sum_{k=1}^K (-i)^k \sum_{j_1 = 1}^q \cdots \sum_{j_k = 1}^q F_k (\hat{x}_{(j_k)}, \dots, \hat{x}_{(j_k, \dots, j_1)}) \hat{w}_{(j_k)} \cdots \hat{w}_{(j_k, \dots, j_1)}.
\end{align}
Let us first bound the error between the Gaussian quadrature and the integral.
\begin{restatable}{lemma}{fkbound}
\label{lemma_fkbound}
Suppose $F_k$ is defined as in \cref{def:F_k}.
Let $\hat x_{(j_1,\dots,j_k)}$ and $\hat w_{(j_1,\dots,j_k)}$ for $j_{\ell}\in [n]$ be the scaled Gaussian quadrature weights and roots as we defined \cref{eq:def_multi_x,eq:def_multi_w}. It holds that
    \begin{align} 
    &\left\|\int_{0 \leq s_1 \leq \cdots \leq s_k \leq t} F_k(s_k, \dots, s_1) \, \mathrm{d}s_1 \cdots \mathrm{d}s_k - \sum_{j_1 = 1}^q \cdots \sum_{j_k = 1}^q F_k (\hat{x}_{(j_k)}, \dots, \hat{x}_{(j_k, \dots, j_1)}) \hat{w}_{(j_k)} \cdots \hat{w}_{(j_k, \dots, j_1)} \right\|\\
    & \leq O\left(\frac{(2t)^{k-1}}{(k-1)!} \frac{\|D\|^{2q}\|B\|^{k} t^{2q}q}{(2q)!}\right). 
\end{align}
\end{restatable}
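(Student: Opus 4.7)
The plan is to proceed by induction on $k$, peeling off the outermost integration variable $s_k$ at each step. The base case $k=1$ follows directly from the one-dimensional Gauss--Legendre error formula recalled in \cref{sec:def_gaussian}: writing $F_1(s_1)=e^{iDs_1}Be^{-iDs_1}$ and iterating the identity $\partial_{s_1}F_1=[iD,F_1]$ gives $\|F_1^{(2q)}(s_1)\|\leq (2\|D\|)^{2q}\|B\|$, and substituting into the 1D bound produces an error of order $\|D\|^{2q}\|B\|t^{2q+1}q/((2q)!\,4^q)$, which sits well inside the claim.

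For the inductive step the key algebraic observation is the factorization
\begin{align}
F_k(s_k,s_{k-1},\ldots,s_1)=\tilde B(s_k)\,F_{k-1}(s_{k-1},\ldots,s_1),\qquad \tilde B(s_k):=e^{iDs_k}Be^{-iDs_k},
\end{align}
which satisfies $\|\tilde B(s_k)\|=\|B\|$. Hence the $k$-fold nested integral decomposes as $I_k(t)=\int_0^t\tilde B(s_k)\,I_{k-1}(s_k)\,\mathrm{d}s_k$, where $I_{k-1}(s_k)$ denotes the $(k-1)$-fold integral of $F_{k-1}$ over the simplex $\{0\leq s_1\leq\cdots\leq s_{k-1}\leq s_k\}$, and the nested quadrature inherits exactly the same outer structure, $\tilde I_k(t)=\sum_{j_k}\tilde B(\hat s_{j_k})\,\tilde I_{k-1}(\hat s_{j_k})\,w_{j_k}$. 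By the triangle inequality the error splits into an inner piece $E_{\mathrm{in}}:=\bigl\|\sum_{j_k}\tilde B(\hat s_{j_k})[I_{k-1}(\hat s_{j_k})-\tilde I_{k-1}(\hat s_{j_k})]w_{j_k}\bigr\|$ and an outer 1D-quadrature error $E_{\mathrm{out}}$ for the vector-valued integrand $H(s_k):=\tilde B(s_k)I_{k-1}(s_k)$.

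Bounding $E_{\mathrm{in}}$ is straightforward: apply the inductive hypothesis to each $\hat s_{j_k}\leq t$, then sum with the positive weights $w_{j_k}$ using $\sum_{j_k}w_{j_k}=t$; the factor $(2t)^{k-2}/(k-2)!$ picks up an additional $t\|B\|$ to yield the desired $(2t)^{k-1}/(k-1)!$, with any residual $O(k)$ absorbed either into the big-$O$ constant or by exploiting the tighter estimate $\sum_{j_k}w_{j_k}\hat s_{j_k}^{2q+k-2}\approx t^{2q+k-1}/(2q+k-1)$ implicit in the polynomial exactness of Gauss--Legendre. The outer piece $E_{\mathrm{out}}$ is bounded via the 1D Gauss--Legendre formula, which reduces everything to controlling $\|H^{(2q)}(s_k)\|$; Leibniz combined with $\|\tilde B^{(m)}\|\leq(2\|D\|)^m\|B\|$ (from the adjoint representation) leaves only $\|I_{k-1}^{(r)}(s_k)\|$ to be estimated.

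The main obstacle is the combinatorial accounting for $\|I_{k-1}^{(r)}(s_k)\|$. Because the integrand $F_{k-1}$ itself does not depend on $s_k$, each derivative either (i) evaluates the outermost remaining integral at its upper limit $s_k$, reducing the simplex dimension by one while pinning one argument of $F_{k-1}$ to $s_k$, or (ii) differentiates $F_{k-1}$ with respect to a previously pinned argument. After $r$ derivatives each resulting term is bounded, via $\|\partial^b F_{k-1}\|\leq(2\|D\|)^b\|B\|^{k-1}$ and the remaining-simplex volume $s_k^{k-1-a}/(k-1-a)!$, by $(2\|D\|)^{r-a}\|B\|^{k-1}s_k^{k-1-a}/(k-1-a)!$ summed over $0\leq a\leq r$; collecting the Leibniz weights yields $\|H^{(2q)}(s_k)\|=O\bigl((4\|D\|)^{2q}\|B\|^k s_k^{k-1}/(k-1)!\bigr)$. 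Substituting into the 1D Gauss--Legendre error formula then gives $E_{\mathrm{out}}$ bounded by the claim (with slack of order $t/4^q$), and summing with $E_{\mathrm{in}}$ closes the induction.
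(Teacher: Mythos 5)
Your proposal follows essentially the same strategy as the paper's proof: peel off one integration variable at a time, control the newly introduced error by the one-dimensional Gauss--Legendre formula with the derivative bound $\|J^{(2q)}\|\leq(2\|D\|)^{2q}\|B\|$ for $J(s)=e^{iDs}Be^{-iDs}$, bound the untouched inner simplex by its volume, and sum the discretized outer variables using the nested weights. The only structural difference is the direction of the recursion: the paper telescopes from the outside in (at stage $l$ the outer $l+1$ variables are already quadrature sums and the increment is the discretization of the next one, with the $2^{k-1}$ emerging from a binomial sum over stages), whereas you induct from the inside out. One point where you are actually more careful than the paper: the outermost integrand is the product $J(s_k)I_{k-1}(s_k)$, whose inner factor depends on $s_k$ through its upper limit, so the quadrature error genuinely requires a Leibniz bound on $\|(J\cdot I_{k-1})^{(2q)}\|$ rather than the paper's factorization into (quadrature error of $J$) times (sup of $I_{k-1}$); your pinning/differentiation accounting supplies this.

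The one loose end is the $O(k)$ residual you flag in $E_{\mathrm{in}}$. Of your two proposed remedies, the first does not work: the inductive hypothesis gives $\frac{(2t)^{k-2}}{(k-2)!}$ and the crude bounds $\hat{s}_{j_k}\leq t$, $\sum_{j_k}w_{j_k}=t$ produce $\frac{2^{k-2}t^{k-1}}{(k-2)!}$, which exceeds the target $\frac{2^{k-1}t^{k-1}}{(k-1)!}$ by a factor $(k-1)/2$; this factor compounds multiplicatively across the induction into roughly $k!/2^k$ and cannot be absorbed into a big-$O$ constant. The second remedy is the right one, but it requires (i) strengthening the inductive hypothesis to the bound $C\frac{(2s)^{k-2}}{(k-2)!}\frac{\|D\|^{2q}\|B\|^{k-1}s^{2q}q}{(2q)!}$ as a function of the upper limit $s$ (which you implicitly do by writing $\epsilon_{k-1}(\hat{s}_{j_k})$), and (ii) the estimate $\sum_{j_k}w_{j_k}\hat{s}_{j_k}^{\,2q+k-2}=O\bigl(t^{2q+k-1}/(2q+k-1)\bigr)$, which is \emph{not} literal polynomial exactness since the degree $2q+k-2$ exceeds $2q-1$; you need the additional observation that the Gauss--Legendre error for the monomial $x^{2q+k-2}$ is $\binom{2q+k-2}{k-2}t^{2q+k-1}\cdot O(q/2^{4q})$, which is negligible for $k\ll q$. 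With that supplied, the induction closes and the bound matches the paper's.
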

The proof is postponed to \cref{sec:proof-technical}.
By \cref{lemma_fkbound}, for all $K$ terms, the total error is
\begin{align}
    &\left\|\sum_{k=0}^K\int_{0\leq s_1 \leq \cdots \leq  s_k\leq t}F_k(s_k)\cdots F_k(s_1)\mathrm{d}s_1\cdots\mathrm{d}s_k - 
    \sum_{k=0}^K\sum^q_{j_1 = 1}\cdots \sum^q_{j_k = 1}F_k(\hat{x}_{(j_k)})\cdots F_k(\hat{x}_{(j_k,\dots,j_1)})\hat{w}_{(j_k)}\cdots \hat{w}_{(j_k,\dots , j_1)} \right\|\\
    &\leq \sum_{k=0}^K \left\|\int_{0\leq s_1 \leq \cdots \leq  s_k\leq t}F_k(s_k)\cdots F_k(s_1)\mathrm{d}s_1\cdots\mathrm{d}s_k - \sum^q_{j_1 = 1}\cdots \sum^q_{j_k = 1}F_k(\hat{x}_{(j_k)})\cdots F_k(\hat{x}_{(j_k,\dots,j_1)})\hat{w}_{(j_k)}\cdots \hat{w}_{(j_k,\dots , j_1)} \right\| \\
    &= O\left(\sum_{k=0}^K \frac{(2t)^{k-1}}{(k-1)!} \frac{\|D\|^{2q}\|B\|^{k} t^{2q}q}{(2q)!} \right) = O\left(\frac{\|D\|^{2q} t^{2q + 1} q}{(2q)!} \left(e^{4t\|B\|}\right)\right).
\end{align}
Therefore, the total error of using \cref{final_simulation_sum} to simulate the time-dependent evolution operator is
\begin{align}
    O\left(\frac{(t\|B\|)^{K+1}}{(K+1)!} + \frac{\|D\|^{2q} t^{2q + 1} q}{(2q)!} \left(e^{4t\|B\|}\right) \right),
\end{align}
where the first term comes from Duhamel's principle and the second term comes from Gaussian quadrature. 
Now let us choose $q$ and $K$ such that the above error is within $\epsilon$.
We first consider time period $[0,t]$, such that $t\alpha = \Theta(1)$. This ensures $t\|B\| = \Theta(1)$. To make the first term within error $\epsilon$, it suffices to choose
    $K = O\left(\frac{\log(1/\epsilon)}{\log\log(1/\epsilon)}\right)$.
    Given that \( \|D\| \) is expected to be large, the second term dominates. By \cref{eq:D_norm}, we need  
    $\frac{T^{8q + 1}q (\dot H_{\max})^{2q}}{(2q)!  {\epsilon}^{4q}} \leq \epsilon$.
Applying Stirling’s approximation,
we have $\log q\left(1-\frac{1}{2q}\right) \geq \log(T^4\dot H_{\max}/\epsilon^2)$,
requiring that
    $q \geq \Omega(T^4\dot H_{\max}/\epsilon^2)$,
which represents the number of quadrature points. For simulating time period $T$, there are $O(T\alpha)$ slices in total (see our analysis later explaining why we are choosing this number of time slices). The error for each slice should be within
    $O\left(\frac{\epsilon}{T\alpha}\right)$.
Then the value of $q$ and $K$ should be at least
\begin{align}
    \label{eq:q-k-bound}
    q \geq \Omega\left(\frac{T^6 \alpha^2 \dot H_{\max}}{\epsilon^2}\right), \quad 
    K = \Theta\left(\frac{\log(T\alpha/\epsilon)}{\log \log (T\alpha/\epsilon)}\right).
\end{align}

\subsection{Algorithmic implementation}
Now we use \cref{lemma:sum-to-be} and Gaussian quadrature to approximate \cref{eq:2variable_expansion}.

Let us construct the block-encoding for $B$ and $e^{-iD t}$ for all $t \in [0,T]$. One can construct a block-encoding of $B$ with normalizing factor $\|B\|_{\max}$ by controlled rotation and swap operator, which requires $O(1)$ queries to $H(t)$ and additional $O(n)$ gates, where $n$ is the number of qubits to in the Hamiltonian system. Since $D$, previously defined as in \cref{eq:def_operator_D},
is diagonal, we can construct $e^{-i D t}$ by controlled rotations. Therefore, we can efficiently construct $e^{-iD t}$ with $O(1)$ gates. 
To implement \cref{sum_series} with quantum circuits with \cref{lemma:sum-to-be}, we first construct 
\begin{align}
\label{oracle_AI}
    \sum_t |t\rangle\langle t| \otimes (e^{iD t} U_B e^{-iD t}) = \sum_t |t\rangle \langle t| \otimes V(t),
\end{align}
as the controlled unitary operator,
with $U_B$ being the block-encoding of $B$. Note that we have defined $V \coloneqq e^{iDt} U_B e^{-iDt}$ to simplify the presentation. 
We have $\mathrm{HAM_T}$ defined in \cref{oracle_inputblockencoding} being a block-encoding of $H_{\mathrm{sys}}$ with normalizing constant $\alpha$. This gives construction of $U_B$ as $U_B = (Q^{\dagger} \otimes I)(\mathrm{HAM_T})(Q\otimes I)$. 
The normalizing constant of the block-encoding $U_B$ is $\alpha = d H_{\max}$, and it can be implemented using $O(1)$ queries to $\mathrm{HAM_T}$ by \cref{oracle_inputblockencoding}. 
By \cite[Lemma 53]{GSLW19}, we know that $V(\hat{x}_{(j_k)}) V(\hat{x}_{(j_k,j_{k-1})})\cdots V(\hat{x}_{j_k, j_{k-1},\dots,j_1})$ is an $\alpha^k$ -block-encoding of
    $F_k(\hat{x}_{(j_k)}, \hat{x}_{(j_k,j_{k-1})}, \dots, \hat{x}_{j_k, j_{k-1},\dots,j_1})$, here $F_k$ is defined as in \cref{def:F_k}.

Now, we need to implement the $\mathrm{select}(U)$ as in \cref{lemma:sum-to-be}. In our case, the $\mathrm{select}(U)$ is defined as
\begin{equation}
\label{oracle_select_U}
\begin{aligned}
    \mathrm{select}(U) &\coloneqq \sum_{n=1}^K \sum_{j_1 = 1}^q \cdots \sum_{j_k = 1}^q|k, j_1,\ldots,j_k\rangle \langle k, j_1,\ldots,j_k|\otimes V(\hat{x}_{(j_k)}) V(\hat{x}_{(j_k,j_{k-1})})\cdots V(\hat{x}_{j_k, j_{k-1},\dots,j_1}) \\
   &+ ((I- \sum_{n=1}^K \sum_{j_1 = 1}^q \cdots \sum_{j_k = 1}^q|k, j_1,\ldots,j_k\rangle \langle k, j_1,\ldots,j_k|)\otimes I).
\end{aligned}
\end{equation}
This means that for all $k\in [K],$ $j_1,\ldots,j_k \in [q]$, the effect of this operator is
\begin{equation}
\label{select_v_target}
   |k, j_1,\ldots,j_k\rangle |\psi\rangle \mapsto |k, j_1,\ldots,j_k\rangle V(\hat{x}_{(j_k)}) V(\hat{x}_{(j_k,j_{k-1})})\cdots V(\hat{x}_{j_k, j_{k-1},\dots,j_1})|\psi\rangle + |\phi^{\perp}\rangle,
\end{equation}
Where $|\phi^{\perp}\rangle$ is some state that has zero overlap with the first part.

Suppose our initial state is $|k,j_1, \ldots, j_k\rangle |0\rangle$. 
Using standard techniques, we can prepare the state $|k,j_1, \ldots, j_k, \hat{x}_{(j_1)}, \ldots, \hat{x}_{(j_k)}\rangle$. Here $|k\rangle$ takes unary form, that is
   $|k\rangle = |1,\ldots,1,0, \ldots,0\rangle = |1^k 0^{K-k}\rangle$,
where each of the $1$s and $0$s corresponds to each registers $|j_1\rangle, \ldots, |j_k\rangle$. Those after position $n$ are changed to $|1\rangle$ when the first register is in state $|k\rangle$. This means that the final state is
   $|k,j_1, \ldots, j_k, \hat{x}_{(j_1)}, \ldots, \hat{x}_{(j_k)},1^{K-k}\rangle$.
For any input states
   $|k,j_1, \ldots, j_k, \hat{x}_{(j_1)}, \ldots, \hat{x}_{(j_k)},1^{K-k}\rangle$,
we first apply multiplication operator $M_t$ to all adjacent registers $|\hat{x}_{j_l}\rangle $ and $|\hat{x}_{j_{l-1}}\rangle$:
\begin{equation}
\label{oracle_Mt}
   M_t|x\rangle |y\rangle = |\frac{1}{t} x y\rangle |y\rangle,
\end{equation}
where $t$ is the total time we want to simulate. This operator has gate complexity $O(m\log m)$ \cite{HV21} for $m$-bit integer. To make the precision to be below $O(\epsilon)$, it suffices to choose $m$ to be $m = O(\log 1/\epsilon)$. Now we have state
\begin{equation}
   |k,j_1, j_2, \ldots, j_k\rangle |\hat{x}_{(j_k,\ldots,j_2, j_1)}, \ldots, \hat{x}_{(j_k)},1^{K-k}\rangle,
\end{equation}
where $\hat x_{(j_k, j_{k-1}, \dots, j_1)}$ is defined in \cref{eq:def_multi_x}.
Finally, we apply $A_I$ to each control registers $|\hat{x}_{(j_k,\ldots,j_{k-l})}\rangle$ and target state $|\psi\rangle$. Then we can get
\begin{equation}
   |k,j_1,j_2,\ldots,j_k\rangle|\hat{x}_{(j_k,\ldots,j_2, j_1)}, \ldots, \hat{x}_{(j_k)},1^{K-k}\rangle V(\hat{x}_{(j_k)})\cdots V(\hat{x}_{j_k, j_{k-1},\ldots,j_1})|\psi\rangle + |\phi^{\perp}\rangle.
\end{equation}
Then we uncompute the ancillary qubits to get
\begin{equation}
   |k, j_1,j_2,\ldots,j_k\rangle|0\rangle V(\hat{x}_{(j_k)})\cdots V(\hat{x}_{j_k, j_{k-1},\ldots,j_1})|\psi\rangle + |\phi^{\perp}\rangle,
\end{equation}
which corresponds to our initial target \cref{select_v_target}.
So, to prepare \cref{select_v_target}, we need $O(n)$ queries to $V$, where $n$ is the number of qubits in the Hamiltonian system. The number of qubits required to implement this algorithm depends on the precision. To make the precision to be within $O(\epsilon)$, each ancillary register $|\hat{x}_{(j_k, \ldots, j_{k-l})}\rangle$ requires $O(\log 1/\epsilon)$ ancillary qubits, which brings our total ancillary qubits for the linear combination of block-encodings to be $O(\frac{\log^2(1/\epsilon)}{\log\log (1/\epsilon)})$. For total evolution time $T$, it is 
\begin{align}
    O\left( \frac{\log^2(T\alpha/\epsilon)}{\log\log (T\alpha/\epsilon)} \right).
\end{align}
If we see time register as ancillary qubits, the total number of ancillary qubits is then
\begin{align}
    O\left( \frac{\log^2(T\alpha/\epsilon)}{\log\log (T\alpha/\epsilon)}  + \log \left(M\right)\right) = O\left( \frac{\log^2(T\alpha/\epsilon)}{\log\log (T\alpha/\epsilon)}  + \log \left(\frac{T}{\epsilon}\dot H_{\max}\right)\right).
\end{align}
We also need $O(K)$ implementations of $M_t$ as in \cref{oracle_Mt}. Since the gate complexity is dominated by $M_t$, the total gate cost of preparing $\mathrm{select}(U)$ as in \cref{oracle_select_U} is $O(\log^2 (1/\epsilon))$.

Next, we need to construct the operator $L$ in \cref{lemma:sum-to-be}, such that it prepares the following state
\begin{equation}
\label{operator_B}
   L|0\rangle = \frac{1}{\sqrt{s}}  \sum_{k=1}^K \sum_{j_1 = 1}^q \cdots \sum_{j_k = 1}^q \sqrt{\alpha^k \hat{w}_{(j_k)}\cdots \hat{w}_{(j_k,\ldots j_1)}}|k, j_1,\ldots,j_k\rangle,
\end{equation}
where $s$ is the normalizing constant, and $\alpha$ is the block-encoding constant of \cref{oracle_AI}. Note that
\begin{equation}
   \hat{x}_{(j_k,j_{k-1},\ldots,j_{k-l})} = \frac{1}{t^l} \hat{x}_{(j_k)} \hat{x}_{(j_{k-1})} \cdots \hat{x}_{(j_{k-l})}.
\end{equation}
We further have
\begin{equation}
   \hat{w}_{(j_k,j_{k-1},\ldots,j_{k-l})} = \frac{1}{t^l} \hat{x}_{(j_k)} \hat{x}_{(j_{k-1})} \cdots \hat{x}_{(j_{k-l+1})} w_{(j_{k-l})},
\end{equation}
so
\begin{equation}
\begin{aligned}
   \hat{w}_{(j_k)}\cdots \hat{w}_{(j_k,\ldots j_1)}
   &= \frac{1}{t^{1+2+\cdots +(k-1)}} \hat{x}_{(j_k)}^{k-1} \hat{x}_{(j_{k-1})}^{k-2} \cdots \hat{x}_{(j_2)} w_{j_k} w_{j_{k-1}} \cdots w_{j_1}\\
   &= \frac{1}{t^{k(k-1)/2}} \hat{x}_{(j_k)}^{k-1} \hat{x}_{(j_{k-1})}^{k-2} \cdots \hat{x}_{(j_2)} w_{j_k} w_{j_{k-1}} \cdots w_{j_1}.
\end{aligned}
\end{equation}
Then the state in \cref{operator_B} can be written as
\begin{equation}
\label{oracle_L}
   L|0\rangle = \frac{1}{\sqrt{s}}  \sum_{n=1}^K \sum_{j_1 = 1}^q \cdots \sum_{j_k = 1}^q \frac{1}{\sqrt{t^{k(k-1)/2}}} \sqrt{\alpha^k \hat{x}_{(j_k)}^{k-1} \hat{x}_{(j_{k-1})}^{k-2} \cdots \hat{x}_{(j_2)} w_{j_k} w_{j_{k-1}} \cdots w_{j_1}}|k, j_1,\dots,j_k\rangle.
\end{equation}
Since this is a superposition of $O(q^K)$ different quantum states, a direct implementation of this would require a gate cost of $O(q^K)$, which is not efficient with the choices of $K$ and $q$. 
To address this, we further rewrite the right-hand-side of \cref{oracle_L} as
\begin{equation}
\label{eq:target_state}
   \frac{1}{\sqrt{s}}  \sum_{k=1}^K \sqrt{\frac{\alpha^k}{t^{k(k-1)/2}}} |k\rangle \sum_{j_1 = 1}^q \sqrt{w_{j_1}}|j_1\rangle \sum_{j_2 = 1}^q\sqrt{\hat{x}_{j_2} w_{j_2}}|j_2\rangle \cdots \sum_{j_k = 1}^q  \sqrt{\hat{x}_{(j_k)}^{k-1} w_{j_k}} |j_k\rangle,
\end{equation}
by our previous analysis in \cref{eq:q-k-bound}, the value of $q$ and $K$ should be at least
\begin{align}
    \label{eq:q-bound}
    q \geq \Omega\left(\frac{T^4 \dot H_{\max}}{\epsilon^2}\right), \quad 
    K = \Theta\left(\frac{\log(1/\epsilon)}{\log \log (1/\epsilon)}\right).
\end{align}
To prepare \cref{eq:target_state}, we can prepare states $\sum_{k=1}^K \sqrt{\frac{\alpha^k}{t^{k(k-1)/2}}} |k\rangle, \sum_{j_1 = 1}^q \sqrt{w_{j_1}}|j_1\rangle, \dots, \sum_{j_k = 1}^q  \sqrt{\hat{x}_{(j_k)}^{k-1} w_{j_k}} |j_k\rangle$ separately, and tensor them together. For state $\sum_{k=1}^K \sqrt{\frac{\alpha^k}{t^{k(k-1)/2}}} |n\rangle$, we can use method introduced in \cite{SBM06} to prepare with gate complexity $O(K)$. For other states 
    $\sum_{j_1 = 1}^q \sqrt{w_{j_1}}|j_1\rangle$, $\dots$, $\sum_{j_k = 1}^q  \sqrt{\hat{x}_{(j_k)}^{k-1} w_{j_k}} |j_k\rangle$,
they are independent of the input Hamiltonian. We can further speed up the gate complexity for state preparation to $O(\log q)$ by the following lemma together with the Grover-Rudolph method~\cite{GR02}. 
\begin{restatable}{lemma}{stateprep}
    Suppose $P_q(x)$ is Legendre polynomial of degree $q$, $\{x_i\}_{i=1}^q$ and $\{w_i\}_{i=1}^q$ are corresponding Gauss-Legendre quadrature roots and weights. One can prepare state $|\psi\rangle$ using $O(\log 1/\epsilon)$ 1- and 2-qubit gates such that
    \begin{equation}
    \label{target_state}
        \left\||\psi\rangle - \frac{1}{\sqrt{\sum_{j = 1}^q  x_{j}^\ell w_j}}\sum_{j = 1}^q  \sqrt{x_{j}^\ell w_j} |j\rangle \right\| \leq O(\epsilon),
    \end{equation}
    by choosing $q = \Omega\left(\frac{1}{\epsilon^2}\log \frac{1}{\epsilon}\right)$, where $\ell = O(\log 1/\epsilon)$.
\end{restatable}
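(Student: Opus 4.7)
My plan is to realise the state via the Grover-Rudolph procedure~\cite{GR02}, which builds $\sum_j \sqrt{p_j}\,|j\rangle$ through a binary-tree cascade of $O(\log q)$ conditional single-qubit rotations. Label the leaves by $j\in\{1,\dots,q\}$; at a node of the tree associated with a dyadic sub-interval $[a,b]$ and midpoint $m$, the rotation angle is the one whose squared cosine equals the ratio $\sum_{j=a}^{m} x_j^\ell w_j \,/\, \sum_{j=a}^{b} x_j^\ell w_j$ (the overall normalisation $\sum_k x_k^\ell w_k$ cancels out of every such ratio). Provided that each such ratio can be computed on a classical side-register to sufficient precision, the quantum part of the circuit uses only $O(\log q) = O(\log(1/\epsilon))$ 1- and 2-qubit gates, as claimed.

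For the classical side computation I would apply \cref{lemma:partial-sum} with $f(x)=x^\ell$, whose derivative is bounded by $\ell$ on the Gauss--Legendre interval. The lemma gives
\begin{equation}
    \sum_{j=a}^{b}x_j^\ell w_j \;=\; \int_{x_a}^{x_b} x^\ell\,\mathrm{d}x \,+\, O\!\left(\frac{\ell(b-a)}{q^2}\right) \;=\; \frac{x_b^{\ell+1}-x_a^{\ell+1}}{\ell+1} \,+\, O\!\left(\frac{\ell(b-a)}{q^2}\right),
\end{equation}
so every partial sum collapses to a closed-form polynomial in the two pre-computed endpoints $x_a$ and $x_b$. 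Because $\ell=O(\log(1/\epsilon))$, this polynomial can be evaluated to $\mathrm{poly}(\epsilon)$ precision by repeated squaring, which fits within the stated gate budget.

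It remains to choose $q$ so that the accumulated error in the final state is $O(\epsilon)$. Since $b-a\le q$, each partial-sum estimate carries an absolute error of at most $O(\ell/q)$; passing from a probability estimate to the corresponding amplitude through an $\arcsin\sqrt{\cdot}$ step introduces a square root, so each node of the tree contributes $O(\sqrt{\ell/q})$ to the amplitude error, and the $O(\log q)$ levels add at worst. Requiring the total error to be $O(\epsilon)$ together with $\ell=O(\log(1/\epsilon))$ forces $q=\Omega(\epsilon^{-2}\log(1/\epsilon))$, precisely the bound stated in the lemma. The main obstacle I foresee is the relative-error loss at tree nodes where the conditional probability is exponentially small (for instance, when $\ell$ is large and all roots of $[a,b]$ cluster near zero): the rotation angle is extracted by division, and an absolute error $O(\ell/q)$ in the numerator can blow up into a huge relative error once the denominator is tiny. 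The cleanest way around this is a case split in which a ``bulk'' region uses the absolute-error bound above directly, while a ``tail'' region of prefixes whose total amplitude mass is already below $\epsilon$ has its rotations rounded to zero without affecting the final error. This tail-handling is exactly what converts the naive $\epsilon^{-1}$ scaling of $q$ into the $\epsilon^{-2}\log(1/\epsilon)$ scaling claimed by the lemma.
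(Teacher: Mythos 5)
Your overall route---Grover--Rudolph driven by the partial sums supplied by \cref{lemma:partial-sum}---is the same as the paper's, and you correctly identify the one place where this plan can fail: at a tree node whose conditional probability is small, an absolute error $O(\ell/q)$ in an estimated partial sum becomes a large relative error in the rotation angle after division. But you leave that obstacle unresolved: the bulk/tail case split is only sketched, and your claim that tail-handling is what produces the $\epsilon^{-2}\log(1/\epsilon)$ scaling is not right (in the paper that scaling comes from the square-root relation between probability error and amplitude error, combined with the $2$-norm over $q$ leaves). Concretely, the paper avoids the small-denominator problem altogether by observing that the product of conditional probabilities telescopes---each intermediate estimated partial sum appears once as a numerator and once as a denominator and cancels exactly---so the final amplitude of the $k$-th leaf (\cref{eq:k-amplitude}) is a product of only three factors: the estimated top-level normalization (a constant up to $O(\ell/q)$), the ratio $\sqrt{S(k,k+1)}/\sqrt{x_k^{\ell}w_k + x_{k+1}^{\ell}w_{k+1}}$ with the \emph{exact} value in the denominator, and the exact leaf value $\sqrt{x_k^{\ell}w_k}$. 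The potentially dangerous ratio is then tamed because the amplitude it multiplies is itself dominated by the denominator: writing $A = x_k^{\ell}w_k + x_{k+1}^{\ell}w_{k+1}$ and $S = S(k,k+1)$, one has $\bigl|(\sqrt{S/A}-1)\sqrt{x_k^{\ell}w_k}\bigr| \le |\sqrt{S}-\sqrt{A}| \le \sqrt{|S-A|} = O(\sqrt{\ell}/q)$, so no division by a small quantity ever occurs. Summing the squares of these per-amplitude errors over the $q$ leaves gives a $2$-norm error $O(\sqrt{\ell/q})$, and $q = \Omega(\epsilon^{-2}\log(1/\epsilon))$ follows.

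A secondary issue is that your level-by-level accounting does not deliver the stated bound even where it applies: letting $O(\log q)$ levels each contribute $O(\sqrt{\ell/q})$ and requiring the total to be $O(\epsilon)$ forces $q = \Omega(\epsilon^{-2}\,\mathrm{polylog}(1/\epsilon))$ with extra polylogarithmic factors beyond the single $\log(1/\epsilon)$ claimed in the lemma. To close the proof you should either adopt the telescoping argument above, or actually carry out the bulk/tail split with an explicit bound on the total amplitude mass discarded in the tail.
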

The proof is postponed to \cref{sec:proof-technical}.
To guarantee that the final state in \cref{operator_B} is approximated to within an error~$\epsilon$, it suffices to prepare each subregister in the state
\begin{equation}\label{eq:subregister-state}
  \sum_{j=1}^q \sqrt{\hat{x}_{(j)}^{\,l}\,w_{j}}\;\bigl|j\bigr\rangle
\end{equation}
with precision
\begin{equation}\label{eq:error-bound}
  O\left(\frac{\epsilon}{k}\right)
  \;=\;
  O\left(\frac{\epsilon}{\log(1/\epsilon)}\right),
\end{equation}
where
\begin{equation}\label{eq:k-definition}
  k = \Theta\left(\frac{\log(1/\epsilon)}{\log\log(1/\epsilon)}\right).
\end{equation}
By enforcing that each amplitude‐encoding subroutine incurs at most $O(\epsilon/k)$ error, the total error in the overall state remains bounded by~$\epsilon$.

By our result in state preparation, this requires our $q$ to be 
    $q = \Omega\left({\frac{k^2}{\epsilon^2}\log \frac{k}{\epsilon}}\right) = \Omega\left(\frac{1}{\epsilon^2} \log^3 \frac{1}{\epsilon}\right)$.
Comparing this with our previous requirement for $q$ in \cref{eq:q-bound},
to minimize our gate complexity, it suffices to choose 
    $q = O\left(\frac{T^4\dot H_{\max}}{\epsilon^2} \log^3\left(\frac{1}{\epsilon}\right)\right)$.
Then the gate complexity of constructing $L$ as in \cref{operator_B} is 
\begin{align}
    O(K + K\log q) = \left(\log\left(\frac{T\dot H_{\max}}{\epsilon}\right)\cdot\frac{\log(1/\epsilon)}{\log\log(1/\epsilon)}\right).
\end{align}
The remaining part of our gate complexity comes from the system. For every query oracle in \cref{oracle_Ham}, it requires $O(n)$ gates to implement it. 

We are ready to use \cref{lemma:sum-to-be} to implement
an $\alpha_W$-block-encoding of \cref{sum_series}, where $\alpha_W$ is
\begin{align}
   \alpha_W
   &= \sum_{k=1}^K \sum_{j_1 = 1}^q \cdots \sum_{j_k = 1}^q \frac{1}{{t^{k(k-1)/2}}} {\alpha^n \hat{x}_{(j_k)}^{k-1} \hat{x}_{(j_{k-1})}^{k-2} \cdots \hat{x}_{(j_2)} w_{j_k} w_{j_{k-1}} \cdots w_{j_1}}\\
   &= \sum_{k=1}^K \alpha^n \sum_{j_1 = 1}^q \cdots \sum_{j_k = 1}^q \hat{w}_{(j_k)}\cdots \hat{w}_{(j_k,\dots, j_1)} = \sum_{k=1}^K \frac{(\alpha t)^k}{k!} = O(e^{\alpha t}).
\end{align}
To boost the success probability of oblivious amplitude amplification~\cite{BCCKS15}, we split the whole time period into smaller ones as we did previously. Suppose the evolution time is $t$, still we make $t\alpha = \Theta(1)$. Then for each segment, the evolution time is $t = \Theta(1/\alpha)$.
Now then $\alpha_W = O(1)$, the total gate cost is still dominated by $\text{select}(U)$ as in \cref{oracle_select_U} and $L$ as in \cref{oracle_L}. 

Now, suppose the total evolution time is $T$, and there are $O(T/\alpha)$ segments in total. For each segment, the error is required to be within $O(\epsilon/T\alpha)$, and the total gate complexity is then 
\begin{align}
    O\left(\alpha T \left(n + \log\left(\frac{T^2\alpha\dot H_{\max}}{\epsilon}\right) \right) \cdot\frac{\log(T\alpha/\epsilon)}{\log\log(T\alpha/\epsilon)}\right).
\end{align}
In the sparse Hamiltonian input model, it is then
\begin{align}  
    O\left(dTH_{\max}\left(n +
    \log\left(\frac{T^2dH_{\max}\dot H_{\max}}{\epsilon}\right)\cdot\frac{\log(TdH_{\max}/\epsilon)}{\log\log(TdH_{\max}/\epsilon)}\right)\right),
\end{align}
where $d$ is the sparsity of our Hamiltonian.
The query complexity is then
\begin{align}
    O\left(dTH_{\max}\frac{\log(TdH_{\max}/\epsilon)}{\log\log(TdH_{\max}/\epsilon)}\right),
\end{align}
and the total number of ancillary qubits is
\begin{align}
    O\left( \frac{\log^2(TdH_{\max}/\epsilon)}{\log\log (Td H_{\max}/\epsilon)}  + \log \left(\frac{T}{\epsilon}\dot H_{\max}\right)\right).
\end{align}
This proves \cref{thm:main-theorem}.

\clearpage
\bibliographystyle{alpha}
\bibliography{ref}

\appendix
\section{Proofs for Gaussian quadrature}
\label{sec:proof-quadrature}
\rootspacing*
\begin{proof}
  By \cite[Eq.~(3.2)]{Brassthegaussian96}, we have
\begin{align}
    x_j = \cos \left(\frac{\pi}{q + 1/2} (j - \frac{1}{4})\right) + O(1/q^2).
\end{align}
By Taylor expansion, we expand $\cos(y + \delta)$ at $y$ as
\begin{align}
    \cos(y + \delta) 
    &= \cos y + \delta \cos' y  + \frac{1}{2} \delta^2\cos'' y \\
    &= \cos y - \delta \sin y   + O(\delta^2).
\end{align}
If we set $y = \frac{\pi}{q + \frac{1}{2}} (j - \frac{1}{4})$ and $\delta = \frac{\pi}{q + \frac{1}{2}}$, we can approximate $x_{j + 1}$ by
\begin{align}
    x_{j + 1} &= \cos \left(\frac{\pi}{q + \frac{1}{2}} (j + 1 - \frac{1}{4})  \right)  \\
    &= \cos y - \sin y \cdot \delta + O(\delta^2)\\
    &=  \cos \left(\frac{\pi}{q + \frac{1}{2}} (j - \frac{1}{4})\right) - \sin \left(\frac{\pi}{q + \frac{1}{2}} (j - \frac{1}{4})\right) \frac{\pi}{q + \frac{1}{2}} + O(1/q^2).
\end{align}
Therefore,
\begin{align}
    |x_{j + 1} - x_j| 
    &= \cos \left(\frac{\pi}{q + \frac{1}{2}} (j + 1 - \frac{1}{4})  \right) - \cos \left(\frac{\pi}{q + \frac{1}{2}} (j - \frac{1}{4})\right) + O(1/q^2) \\
    &= \frac{\pi}{q + \frac{1}{2}} \sin \left(\frac{\pi}{q + \frac{1}{2}}(j - \frac{1}{4})\right) + O(1/q^2) \\
    &= \frac{\pi}{q}\sqrt{1-x_j^2} + O(1/q^2).
\end{align}
\end{proof}

\weightbound*
\begin{proof}
    In Gauss-Legendre quadrature, weight $w_j$ corresponding to $j$-th roots $x_j$ of Legendre polynomial $P_q'(x)$ of degree $q$ is
    \begin{align}
    \label{eq9}
        w_j = \frac{2}{(1-x_j^2)P_n'(x_j)^2}.
    \end{align}
    For Legendre polynomial $P_q'(x)$ of degree $q$, by \cite[Theorem 8.21.2]{szego75}, asymptotically, when $\cos\theta$ is close to the roots, one has
    \begin{align}
        P_q(\cos \theta) = \sqrt{\frac{2}{\pi q \sin \theta}} \cos \left(\left(q + \frac{1}{2}\right)\theta - \frac{\pi}{4}\right) + O(q^{-3/2}).
    \end{align}
    Here $\cos \theta = x$, and we define $\cos\theta_j = x_j$ as the $j$-th root of our Legendre polynomial. Then
    \begin{align}
        \frac{\mathrm{d}}{\mathrm{d}\theta}P_q(\cos \theta)\mid_{\theta = \theta_j} 
        &= \frac{\mathrm{d}P_q(\cos \theta)}{\mathrm{d}(\cos\theta)} \frac{\mathrm{d}(\cos\theta)}{\mathrm{d}\theta}\mid_{\theta = \theta_j} \\
        &= \sqrt{\frac{2}{\pi q \sin \theta_j}} \cdot \left(q + \frac{1}{2}\right)\sin \left(\left(q + \frac{1}{2}\right)\theta_j - \frac{\pi}{4}\right) + O(q^{-3/2})\\
        &= \sqrt{\frac{2}{\pi q \sin \theta_j}} \cdot \left(q + \frac{1}{2}\right) + O(q^{-3/2}).
    \end{align}
    Here we have used 
    \begin{align}
        \cos \left(\left(q + \frac{1}{2}\right)\theta_j - \frac{\pi}{4}\right) = 0.
    \end{align}
    Therefore, one has
    \begin{align}
        \frac{\mathrm{d}P_q(\cos \theta)}{\mathrm{d}(\cos\theta)} \mid_{\theta = \theta_j} = \sqrt{\frac{2}{\pi q \sin^3 \theta_j}} \cdot \left(q + \frac{1}{2}\right) + O(1/q^3).
    \end{align}
    Then 
    \begin{align}
        P_q'(x_j)^2 = \frac{2}{\pi q \sin^3 \theta_j}\cdot \left(q + \frac{1}{2}\right)^2 + O(1/q^3),
    \end{align}
    here $x_j = \cos \theta_j$. By substituting back to \cref{eq9}, one has
    \begin{align}
        w_j 
        &= \frac{\pi q \sin \theta_j}{\left(q + \frac{1}{2}\right)^2} + O(1/q^3) \\
        &= \frac{\pi}{q} \sqrt{1-x_j^2} + O(1/q^2).
    \end{align}
    \end{proof}

\partialsum*
\begin{proof}
    By \cref{lemma:spacing,lemma:weight}, one has
\begin{align}
    \sum_{j = a}^b f(\hat x_j) w_j = \sum_{j = a}^b f(\hat x_j) (\hat x_{j + 1} - \hat x_j) + O((b-a)/q^2) = \sum_{j = a}^b f(\hat x_j) \delta + O((b-a)/q^2),
\end{align}
Where $\delta \coloneqq x_{j + 1} - x_j = O(1/q)$. By the Riemann sum error, one has
\begin{align}
    \left|\sum_{j=a}^b f(x_j) \delta - \int_{\hat x_a}^{\hat x_b} f(x) \mathrm{d}x \right| \leq  \frac{1}{2} \max_{x\in[-1,1]} |f'(x)| \sum_{i = a}^b \delta^2 = O(l(b-a)/q^2),
\end{align}
\end{proof}

\section{Proofs of technical lemmas}
\label{sec:proof-technical}

\duhamel*
\begin{proof}
    By Duhamel's principle, for a differential equation written in the form of,
    \begin{align}
    \label{eq:duhamel_DE}
        u'(t) = Lu(t) + f(t,u(t)), \quad u(0) = u_0,
    \end{align}
    where $L$ is a linear operator, one has solution of $u(t)$ with the following formula
    \begin{align}
    \label{eq:duhamel_SL}
        u(t) = e^{tL}u_0 + \int_0^t e^{(t-s)L} f(s,u(s))\mathrm{d}s.
    \end{align}
    If we let 
    \begin{align}
        u(t) = e^{-i(D + B)t}, \quad L= -iD, \quad f(t,u(t)) = -iB u(t), \quad u(0) = I,
    \end{align}
    we can write the differential equation \cref{eq:duhamel_DE} as:
    \begin{align}
        u'(t) = -iDu(t) -iBu(t), \quad u(0) = I.
    \end{align}
    Then, by Duhamel's principle, the solution \cref{eq:duhamel_SL} can be written as:
    \begin{align}
        u(t) &= e^{tL}u_0 + \int_0^t e^{(t-s)L} f(s,u(s))\mathrm{d}s\\
        &= e^{-iDt} + \int_0^t e^{-iD(t-s)} (-iB) u(t).
    \end{align}
    By substituting $u(t)$ with $e^{-i(D + B)t}$, one has
    \begin{align}
        e^{-i(D + B)t} = & e^{-iDt} + (-i) \int_0^t e^{-iD(t-s)} B e^{-i(D + B) s} \ \mathrm{d}s.
    \end{align}
\end{proof}

\fkbound*
\begin{proof}
If we further define 
\begin{align}
    J(s) \coloneqq e^{iDs}B e^{-iDs},
\end{align}
the above error may be rewritten as
\begin{align}
    \left\|\int_{0 \leq s_1 \leq \cdots \leq s_k \leq t} J(s_k) \cdots J(s_1) \, \mathrm{d}s_1 \cdots \mathrm{d}s_k - \sum_{j_1 = 1}^q \cdots \sum_{j_k = 1}^q J (\hat{x}_{(j_k)}) \cdots J(\hat{x}_{(j_k, \dots, j_1)}) \hat{w}_{(j_k)} \cdots \hat{w}_{(j_k, \dots, j_1)} \right\| 
\end{align}
We first discretize the last integral variable
\begin{align}
   &\|\int_{0\leq s_1 \leq \cdots \leq  s_k\leq t}J(s_k)\cdots J(s_1)\mathrm{d}s_1\cdots\mathrm{d}s_k\\
   & -\sum^q_{s_k = 1} \int_{0\leq s_1\leq \cdots\leq s_{k-1}\leq \hat{x}_{(s_k)}} J(\hat{x}_{(s_k)})J(s_{k-1})\cdots J(s_1)\hat{w}_{(s_k)}\dd s_1\cdots \dd s_{k-1} \| \\
   &\leq \|\int^t_0\dd s_k J(s_k) - \sum^q_{s_k = 1} J(\hat{x}_{(s_k)})\hat{w}_{(s_k)}\|O(\|B\|^{k-1} \frac{t^{k-1}}{(k-1)!})\\
   &\leq O(\frac{\|J^{(2q)}(t_{\max})\|t^{2q+1}q}{(2q)!2^{4q-1}}\|B\|^{k-1}\frac{t^{k-1}}{(k-1)!})\\
   &= O(\frac{\|D\|^{2q}\|B\|^{k} t^{2q}q}{(2q)!}\frac{t^{k-1}}{(k-1)!})
\end{align}
where we have used the relation $\max_{s\in [0,t]}\|J(s)\| = \|B\|$, defined as the largest spectral norm over time. And
\begin{align}
    \|J^{(2q)}(t_{\max})\| = (2\|D\|)^{2q}\|B\|
\end{align}
We further treat each $\hat{x}_{(j_k)}$ as an upper bound
\begin{align}
   &\|\sum^q_{j_k = 1} \int_{0\leq s_1\leq \cdots\leq s_{k-1}\leq \hat{x}_{(s_k)}} J(\hat{x}_{(s_k)})J(s_{k-1})\cdots J(s_1)\hat{w}_{(s_k)}\dd s_1\cdots \dd s_{k-1} \\
   &- \sum^q_{s_{k-1} = 1}\sum^q_{j_k = 1} \int_{0\leq s_1\leq \cdots \leq s_{k-2}\leq \hat{x}_{(s_k,j_{k-1})}}J(\hat{x}_{(j_k)})J(\hat{x}_{(j_k,j_{k-1})})J(s_{k-2})\cdots
    J(s_1)\hat{w}_{(s_k)}\hat{w}_{(s_k,j_{k-1})}\dd s_1 \cdots \dd s_{k-2}\|\\
   &\leq O\left(\left(\frac{t^{k-2}}{(k-2)!} \sum^q_{j_k = 1}\hat{w}_{(j_k)}\right) \frac{\|D\|^{2q}\|B\|^{k} t^{2q}q}{(2q)!}\right)
   = O\left(\frac{t^{k-1}}{(k-2)!} \frac{\|D\|^{2q}\|B\|^{k} t^{2q}q}{(2q)!}\right).
\end{align}
Doing this repeatedly, we will get
\begin{align}
   &\|\int_{0\leq s_1 \leq \cdots \leq  s_k\leq t}J(s_k)\cdots J(s_1)\mathrm{d}s_1\cdots\mathrm{d}s_k - \sum^q_{j_1 = 1}\cdots \sum^q_{j_k = 1}J(\hat{x}_{(j_k)})\cdots J(\hat{x}_{(j_k,\dots,j_1)})\hat{w}_{(j_k)}\cdots \hat{w}_{(j_k,\dots , j_1)} \| \\
   &\leq O\left(\left(\sum^{k-2}_{l=-1}\frac{t^{k-1}}{(k-l-2)!(l+1)!}\right)\frac{\|D\|^{2q}\|B\|^{k} t^{2q}q}{(2q)!}\right)
   = O\left(\frac{(2t)^{k-1}}{(k-1)!} \frac{\|D\|^{2q}\|B\|^{k} t^{2q}q}{(2q)!}\right).
\end{align}
\end{proof}

\stateprep*
\begin{proof}
   By \cref{lemma:partial-sum}, the partial sums of the squares of the amplitudes can be efficiently estimated. Then, we use the Grover-Rudolph method~\cite{GR02} to efficiently prepare the state. As each partial-sum estimation has errors, we prove the robustness of the Grover-Rudolph method in our case. First of all, notice that the full sum $\sum_{j=1}^q x_j^{\ell} w_j = t^{\ell+1}/(\ell+1)$ is some constant when $t = O(1)$. In the recursive structure of this state preparation approach, the numerator of the current layer will be canceled by the denominator of the previous layer. Let $S(a, b)$ denote the partial sum from $j=1$ to $j=b$ estimated by \cref{lemma:partial-sum}. In the end, the amplitude of the $k$-th entry is
   \begin{align}
       \label{eq:k-amplitude}
       \frac{1}{\sqrt{S(1, q/2) + S(q/2+1, q)}} \frac{\sqrt{S(k, k+1)}}{\sqrt{x_k^{\ell}w_k + x_{k+1}^{\ell}w_{k+1}}} \sqrt{x_k^{\ell}w_k}.
   \end{align}
   Here in the last layer, we used the exact values in the denominator to ensure the unitarity of the rotation gates. In \cref{eq:k-amplitude}, $\sqrt{S(1, q/2) + S(q/2+1, q)}$ is $O(\ell/q)$-away from the exact full sum, which is some constant. We also have
   \begin{align}
       S(k, k+1) = x_k^{\ell}w_k + x_{k+1}^{\ell}w_{k+1} + O(\ell/q^2)
   \end{align}
   because of \cref{lemma:partial-sum}. We have
   \begin{align}
       \left|\frac{1}{\sqrt{S(1, q/2) + S(q/2+1, q)}} \left(\frac{\sqrt{S(k, k+1)}}{\sqrt{x_k^{\ell}w_k + x_{k+1}^{\ell}w_{k+1}}} -1\right) \sqrt{x_k^{\ell}w_k} \right| \leq O(\sqrt{\ell}/q).
   \end{align}
   The above gives an upper bound on the $\infty$-norm, which further implies that
   \begin{align}
               \left\||\psi\rangle - \frac{1}{\sqrt{\sum_{j = 1}^q  x_{j}^l w_j}}\sum_{j = 1}^q  \sqrt{x_{j}^l w_j} |j\rangle \right\| \leq O(\sqrt{\ell/q}).
   \end{align}
   
   Now given that $\ell = O(\log (1/\epsilon))$, choosing $q = \Omega\left(\frac{1}{\epsilon^2}\log \frac{1}{\epsilon}\right)$ yields the desired error bound $O(\epsilon)$. 
\end{proof}

\end{document}